\newtheorem{theorem}{Theorem}[section]
\newtheorem{lemma}[theorem]{Lemma}
\newtheorem{proposition}[theorem]{Proposition}
\newtheorem{example}[theorem]{Example}
\newcommand{\D}{\mathcal{D}}
\title{Sponsored Questions and How to Auction Them}
\author[1]{Kshipra Bhawalkar\thanks{\texttt{kshipra@google.com}}}
\author[2]{Alexandros Psomas\thanks{\texttt{apsomas@purdue.edu}}}
\author[1]{Di Wang\thanks{\texttt{wadi@google.com}}}
\affil[1]{Google Research}
\affil[2]{Purdue University and Google Research}
\date{}
\begin{document}

\maketitle

\begin{abstract}
 Online platforms connect users with relevant products and services using ads. A key challenge is that a user's search query often leaves their true intent ambiguous. Typically, platforms passively predict relevance based on available signals and in some cases offer query refinements.  The shift from traditional search to conversational AI provides a new approach. When a user's query is ambiguous, a Large Language Model (LLM) can proactively offer several clarifying follow-up prompts. 
 In this paper we consider the following: what if some of these follow-up prompts can be ``sponsored,'' i.e., selected for their advertising potential. 
 How should these ``suggestion slots'' be allocated? And, how does this new mechanism interact with the traditional ad auction that might follow?

This paper introduces a formal model for designing and analyzing these interactive platforms. We use this model to investigate a critical engineering choice: whether it is better to build an end-to-end pipeline that jointly optimizes the user interaction and the final ad auction, or to decouple them into separate mechanisms for the suggestion slots and another for the subsequent ad slot. We show that the VCG mechanism can be adopted to jointly optimize the sponsored suggestion and the ads that follow; while this mechanism is more complex, it achieves outcomes that are efficient and truthful. On the other hand, we prove that the simple-to-implement modular approach suffers from strategic inefficiency: its Price of Anarchy is unbounded.
\end{abstract}

\section{Introduction}

The rise of generative AI is fundamentally changing how users find information. Instead of a list of links, knowledge panels, etc., users now receive direct answers from Large Language Models (LLMs), followed by suggestions designed to guide their next step. This conversational interface is a powerful tool for resolving a major challenge in ad auctions: ambiguous user intent. For example, a user searching for ``running shoes'' might be looking for advanced trail running shoes or beginner road running shoes. While there are existing paradigms aiming to address this ambiguity in contemporary search, e.g., related search functionalities~\cite{google_adsense_related_search}, suggested chip refinements~\cite{GoogleSearchEEA}, etc., the shift to direct answers within a conversational flow makes resolving this intent even more pressing.  An LLM that offers distinct follow-up prompts or suggestions (``I would like more information about trail running shoes'') or direct questions (``Are you a beginner runner?''), can help clarify the user's needs in a single click. This not only improves the user experience by avoiding an overwhelming list of irrelevant information, but also creates a valuable new opportunity for advertisers: by sponsoring a suggestion like ``I want to know more about trail running shoes,'' an advertiser can guide a user with relevant intent directly toward their products, and produce better outcomes for the users, advertisers and platform.\footnote{Allowing advertiser interest to steer user conversation should be done with utmost care to ensure high user quality. Companies' policies and regulations might limit what is practically feasible. Our paper should be viewed as theoretically understanding what such a design would look like. We don't make any claims about feasibility from a policy or regulatory perspective, or about concrete plans of our employers.}

Given the limited display space and the varying utility advertisers derive from different suggestions, the platform faces a critical question: How should these ``suggestion'' slots be allocated? And, how does this new mechanism for allocating suggestions interact with the traditional ad auction that might follow? 
In this paper, we introduce a formal model for designing and analyzing these problems. We use this model to investigate a key design choice: whether to build an end-to-end mechanism that jointly optimizes the suggestion and downstream ad auction, or to adopt a simpler modular approach that decouples them into separate auctions.

\tikzset{
    box/.style={
        draw,
        fill=white,
        rounded corners=6pt, % Slightly smaller corners
        text width=6cm,      % Reduced width to fit in a column
        minimum height=0.8cm, % REDUCED HEIGHT
        inner xsep=8pt,
        align=center,
        font=\bfseries\small,
        line width=0.8pt,
        node distance=0.8cm % REDUCED DISTANCE
    },
    query/.style={
        box,
        fill=blue!15,        % Muted light blue
        draw=blue!50!black,  % Darker blue border
        text=black
    },
    system/.style={
        box,
        fill=gray!10,        % Very light gray
        draw=gray!50!black,
        text=black
    },
    action/.style={
        box,
        fill=orange!15,      % Muted light orange
        draw=orange!50!black,
        text=black
    },
    result/.style={
        box,
        fill=yellow!10,      % Light cream/yellow
        draw=orange!30!black,
        text=black,
        font=\normalfont\small
    },
    titlebox/.style={
        draw=black,
        line width=1.5pt,
        fill=gray!25,
        text width=6.5cm,    % Fits within the column width
        minimum height=0.8cm,
        align=center,
        font=\large\bfseries % Reduced font size
    },
    flowarrow/.style={
        ->,
        >=Stealth,
        thick,
        draw=gray!50!black,
        line width=1pt
    }
}

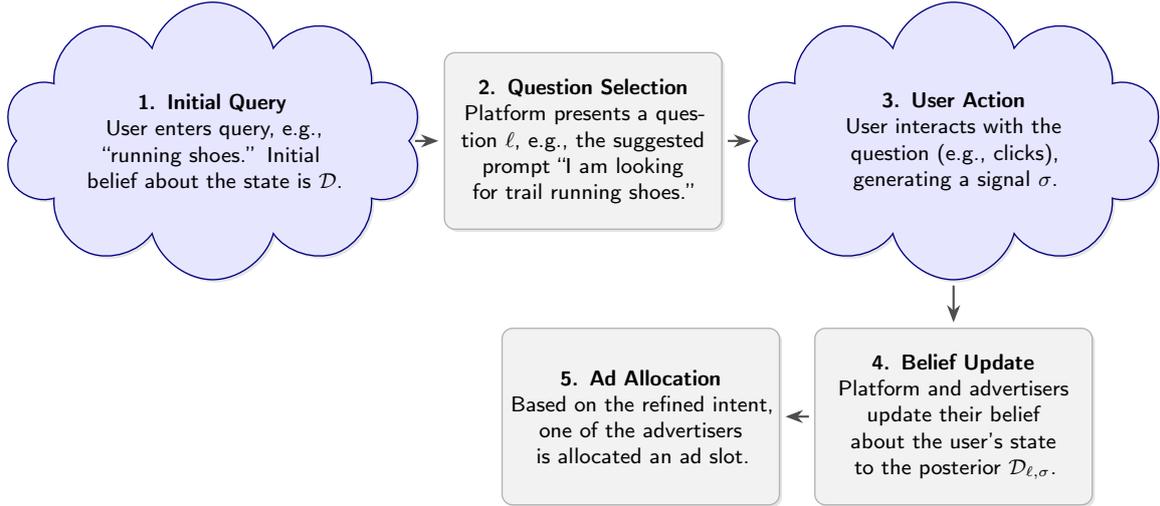
\begin{figure*}[t]
\centering
% The 'scale' option shrinks the entire figure proportionally.
% Adjust the value (e.g., 0.85, 0.95) as needed.
\begin{tikzpicture}[scale=0.9, transform shape,
    node distance=0.7cm and 0.5cm, % y-distance and x-distance
    % --- STYLES ---
    base/.style={
        draw,
        text width=3.8cm, % Slightly reduced width
        align=center,
        font=\sffamily\footnotesize, % Smaller font size
        minimum height=2.6cm, % Slightly reduced height
        line width=0.5pt,
        drop shadow={opacity=0.2, shadow xshift=1pt, shadow yshift=-1pt}
    },
    userInput/.style={
        base,
        cloud,
        cloud puffs=10,
        aspect=2,
        fill=blue!10,
        draw=blue!50!black
    },
    process/.style={
        base,
        rectangle,
        rounded corners=4pt,
        fill=gray!10,
        draw=gray!60
    },
    flow/.style={
        ->,
        >=Stealth,
        thick,
        draw=black!70,
        shorten >=2pt,
        shorten <=2pt,
        rounded corners=5pt
    }
]

% --- NODES (placed in a 2x3 grid) ---
% Top row
\node (step1) [userInput] {\textbf{1. Initial Query} \\ User enters query, e.g., ``running shoes.'' Initial belief about the state is $\mathcal{D}$.};
\node (step2) [process, right=of step1] {\textbf{2. Question Selection} \\ Platform presents a question $\ell$, e.g., the suggested prompt ``I am looking for trail running shoes.''};
\node (step3) [userInput, right=of step2] {\textbf{3. User Action} \\ User interacts with the question (e.g., clicks), generating a signal $\sigma$.};

% Bottom row (placed relative to the top row)
\node (step4) [process, below=of step3] {\textbf{4. Belief Update} \\ Platform and advertisers update their belief about the user's state to the posterior $\mathcal{D}_{\ell, \sigma}$.};
\node (step5) [process, left=of step4] {\textbf{5. Ad Allocation} \\ Based on the refined intent, one of the advertisers is allocated an ad slot.};

% --- ARROWS (connecting the nodes in a loop) ---
\draw [flow] (step1) -- (step2);
\draw [flow] (step2) -- (step3);
% The arrow going down
\draw [flow] (step3.south) -- (step4.north);

% Arrows going left on the bottom row
\draw [flow] (step4) -- (step5);

\end{tikzpicture}
\caption{A flowchart visualizing our model.}
\label{fig:example_flow_5step_small}
\end{figure*}

\subsection{Our contributions}

We introduce a new model for thinking about advertising in interactive platforms. Our model combines information elicitation and (ad) allocation. While directly inspired by modern applications like sponsored suggestions in conversational AI, a key strength of our model is its generality. We give a framework flexible enough to model many different forms of platform-user interaction, including sponsored clickable suggestions (prompts), sponsored questions (that the LLM makes), or other formats.

Our setup is as follows. A platform has a single ad slot for sale to one of $n$ strategic advertisers. The user's intent is captured by a state of nature $\theta$, that encapsulates complex, noisy user features and intent vectors that traditional keyword matching often misses. For the advertiser, $\theta$ represents the latent information about the user that is critical for determining an ad's relevance.
We assume $\theta$ is unknown to all parties and is sampled from a public prior distribution $\D$. Each advertiser $i$ has (1) a private \emph{base} value $v_i \geq 0$, the same across all states, and (2) a \emph{state-dependent} conversion rate $\alpha_i(\theta) \in \mathbb{R}_+$ that represents the ad's relevance for advertiser $i$ given $\theta$.\footnote{This can be, e.g., a click-though-rate, or a conversion value.} Advertisers have quasilinear utility, i.e., $u_i = v_i \cdot \alpha_i(\theta) - p_i$, where $p_i$ is the payment charged by the platform.

At the heart of our model is an \emph{information elicitation stage}: alongside any organic content presented to the user, the platform can choose to ask one of $k$ possible sponsored questions. Asking question $\ell$ generates a signal $\sigma$, observable by everyone, drawn from a distribution  $Q_{\ell}(. \mid \theta)$. Here a ``question'' is the LLM-generated prompt (``I am looking for trail running shoes''), and a ``signal'' is the publicly observable, measurable outcome of the user clicking or ignoring it. The probability of a user's response depends on their latent intent, a relationship captured by the probability distribution $Q_{\ell}(. \mid \theta)$. Given a signal $\sigma$ generated by question $\ell$, advertisers update their belief about the state $\theta$ to a common posterior distribution $\D_{\ell, \sigma}$. This, in turn, implies updated (expected) conversion rates for the ad slot.

\begin{example}[Running shoes]\label{ex:running-shoes}
A user is asking for information about ``running shoes.'' The platform is uncertain about their specific needs. We model the user's intent as the state of nature $\theta = (\theta_{\text{terrain}}, \theta_{\text{exp}}) \in \{0, 1\}^2$, where $\theta_{\text{terrain}} = 1$ if the user is a \emph{trail} runner and $0$ if they are a \emph{road} runner; $\theta_{\text{exp}} = 1$ if the user is an \emph{experienced} runner, and $0$ if they are a \emph{beginner}. The prior belief $\mathcal{D}$ is that all four states in $\Theta = \{ (0,0), (0,1), (1,0), (1,1) \}$ are equally likely; that is, $\mathcal{D}(\theta) = 1/4$ for all $\theta$.

There are two advertisers. \emph{Advertiser 1} sells professional trail running shoes. Their private base value is high, $v_1 = \$50$. Their conversion rate $\alpha_1(\theta)$ is highest for an experienced trail runner, e.g., $\alpha_1(1,1)=0.9$. It is lower for a beginner trail runner, $\alpha_1(1,0)=0.3$, and zero for all road runners ($\alpha_1(0, \cdot) = 0$). \emph{Advertiser 2} sells popular road running shoes for beginners. Their base value is $v_2 = \$30$. Their conversion rate $\alpha_2(\theta)$ is highest for a beginner road runner, $\alpha_2(0,0)=0.8$. It is lower for an experienced road runner, $\alpha_2(0,1)=0.4$, and zero for all trail runners ($\alpha_2(1, \cdot) = 0$).

Before the ad auction, the platform can show one of two possible sponsored suggestions. One possible choice is $\ell_{\text{trail}}$: ``I am looking for trail running shoes.'' If the user clicks on this suggestion, this generates a signal $\sigma_{\text{trail},\text{click}}$, which, for simplicity, we assume perfectly reveals their interest in trails; that is, if the signal is $\sigma_{\text{trail},\text{click}}$,  everyone learns that $\theta_{\text{terrain}}=1$.
The posterior distribution $\D_{\ell_{\text{trail}},\sigma_{\text{trail},\text{click}}}$ is the uniform distribution over $(1,0)$ (trail, beginner) and $(1,1)$ (trail, experienced).
If the user does not click on this suggestion this generates a signal $\sigma_{\text{trail},\text{no click}}$, which, for simplicity, we assume perfectly reveals the user's lack of interest in trails. $\D_{\ell_{\text{trail}},\sigma_{\text{trail},\text{no click}}}$ is the uniform distribution over $(0,0)$ (road, beginner) and $(0,1)$ (road, experienced).
A different choice for a suggestion is $\ell_{\text{targeted}}$: ``I am an experienced trail runner.'' If this is clicked, we assume it completely reveals that the state is $\theta = (1,1)$; otherwise, it is revealed that the state is not $\theta = (1,1)$, yielding a posterior distribution that is uniform over $(0,0)$ (road, beginner), $(1,0)$ (trail, beginner), and $(0,1)$ (road, experienced).
\end{example}

In this paper, we are interested in understanding the trade-offs between two fundamental design philosophies: an end-to-end approach that jointly optimizes the suggestion and auction stages, versus a modular approach that decouples them.

We first analyze the end-to-end approach, where advertisers report their base values, and the platform uses this information to make decisions on which question to present, then which ad to show, and how much to charge the advertisers. We show that the platform can instantiate this as an end-to-end VCG auction. This auction is truthful, i.e., incentivizes the advertisers to report their true base values. An interesting observation is that the VCG payment can be decomposed into a second price payment (coming from the ad allocation step) and an additional payment based on the influence an advertiser's bid had in the selection of the question. While a unified VCG auction may seem complex, we argue it is a strong candidate for practical systems, especially those with autobidders, because of its efficiency and truthfulness guarantees --- guarantees that, as we will show, are lacking in simpler modular approaches.

Next, we consider a natural alternative to the end-to-end approach: a modular mechanism that decouples the auction into two stages (one for the question, and one for the ad). We consider a VCG-per-stage design: advertisers first bid on which question to ask, and after a question is chosen and a signal is observed, they bid again for the final ad slot.
In this setup, an advertiser's value for a given question is simply their expected utility from the subsequent ad auction. We prove that this intuitive behavior constitutes a pure Nash equilibrium: advertisers bid this expected utility in the first stage and their true value in the second. 

This seemingly elegant equilibrium hides significant practical challenges. For example, an advertiser's value for a question is difficult to compute, as it depends on the unknown values and strategies of all other bidders. A tempting solution is for the platform to assist. Perhaps, for example, it could collect advertisers' base values and compute the stage-1 bids on their behalf. We show that this ``proxy'' approach creates incentives for advertisers to misreport their underlying values to manipulate the outcome. Furthermore, the proposed bidding language (bidding on every question) fails to capture the context-dependent nature of conversational AI, where an advertiser's value for a question can change dramatically based on the preceding dialogue. For example, an advertiser might value a question about trail running differently if they know that the user is an experienced runner; asking them to bid on the question by itself might miss such nuances.

Even if these practical hurdles could be overcome, the modular approach suffers from a fatal flaw. We prove that the equilibrium of the VCG-per-stage mechanism has an \emph{unbounded} Price of Anarchy. That is, its outcome can be arbitrarily less efficient than the one achieved by the unified, end-to-end auction. This severe inefficiency is not an artifact of the VCG-per-stage design; we find that similar issues arise when using other common formats, such as first-price or all-pay auctions for the initial question slots.

The main take-home message is that a unified, end-to-end mechanism is a more attractive and robust solution for allocating sponsored suggestions. If a modular design is pursued for its apparent simplicity, our results show that significant further research is needed to develop a mechanism that is both strategically simple for advertisers and achieves good social welfare.

In summary, our contributions are as follows:
\begin{itemize}
    \item We introduce a formal model for sponsored suggestions in interactive platforms.
    \item We observe that the end-to-end VCG mechanism is truthful (DSIC) in this setting and provide a simple characterization of its payments.
    \item We identify and characterize an intuitive pure Nash equilibrium for the natural ``VCG-per-stage'' modular mechanism.
    \item We prove that this modular mechanism has an unbounded Price of Anarchy, and show this severe inefficiency persists even when the platform uses first-price or all-pay auctions in the suggestion stage.
\end{itemize}

\subsection{Related Work}
Many search engines and other products show query refinements or suggested queries to help users. For most popular search engines, there are policies which commit to showing the best results for the user, and which preclude them from considering commercial interest (such as advertiser bids or ad revenue) to optimize these queries~\cite{Microsoft_Bing_Results,Google_Search_Approach}. Exceptions exist; for example, Google's Relevant Search for Content (RSOC), allows publishers to use ``funnel RPM'' in optimization of suggested queries shown on their webpage~\cite{google_adsense_related_search}, thus allowing advertisers to inform which suggestions are shown to the user (note though that Google AFS policies~\cite{Google_AFS_Policies} still require that first and foremost user experience is optimized).

Closer to our interest here, there is growing interest from practice to integrate advertising into conversational search. For instance, Perplexity recently launched an experiment featuring sponsored follow-up prompts~\cite{PerplexityAds}. 
At the same time, the academic literature on the topic is nascent but rapidly growing.  Feizi et al.~\cite{feizi2025online} discuss challenges and opportunities in this area.
One major stream of research focuses on how to natively integrate ads into a final, LLM-generated answer.
Various models have been proposed, including token-based auctions~\cite{duetting2024mechanismdesignlargelanguage}, bidding for placement within an LLM summary~\cite{dubey2024}, fusing organic and sponsored content~\cite{mordo2024}, embedding ads in an independently generated organic answer~\cite{hajiaghayi2024ad}, and bidding to influence the fine-tuning of the LLM itself~\cite{soumalias2025truthfulaggregationllmsapplication}. 
Banchio et al.~\cite{banchio2025ads} study the problem of ad placement in a conversational search; specifically, they study the equilibria under first- and second-price auctions with respect to the timing of showing an ad. In contrast, our work models the preceding, interactive stage as well: using suggestions to influence the \emph{flow} of the conversation itself.
Bergemann et al.~\cite{BergemannEC25} study a setting where agents have private information about both their preferences and a common state, showing that standard mechanisms fail. They propose a solution via ``data-driven mechanisms,'' which use post-allocation data (like observed clicks) to adjust payments and restore truthfulness. While our information structure differs (advertisers' state-dependent conversion rates are public information in our paper), our work addresses the same high-level challenge for the platform: designing a mechanism that must jointly elicit preferences and handle uncertainty about a common state.
\section{Preliminaries}

We consider the problem of a platform with a single ad slot for sale, to one of $n$ strategic advertisers. We will refer to the ad slot as the \emph{item}.
The item is described by a state of nature $\theta \in \Theta$, for some finite set $\Theta$, that is initially unknown to both the platform and the advertisers. $\theta$ is sampled from a known distribution $\D$; we write $\D(\theta)$ for the probability that the state of nature is $\theta$.

Each advertiser $i$ has a private \emph{base} value $v_i \geq 0$ for the item, which stays the same across all states $\theta$. There is also a publicly known \emph{state-dependent} conversion rate $\alpha_i(\theta) \geq 0$ that represents the item's relevance for advertiser $i$ given $\theta$.\footnote{This can be, e.g., a click-through rate, or a conversion value.} The value of advertiser $i$ for an item with state $\theta$ is the base value multiplied by the conversion rate, i.e., $v_i \cdot \alpha_i(\theta)$. Advertisers have quasilinear utilities: when allocated an item with state $\theta$ for a payment $p_i$, advertiser $i$ has utility $u_i = v_i \cdot \alpha_i(\theta) - p_i$.

Before the item is allocated, the platform presents one of $k$ possible questions. Presenting question $\ell$ reveals a random signal $\sigma \in \Sigma_{\ell}$; $\sigma$ is drawn from a conditional distribution over signals $Q_{\ell}(. \mid \theta)$. 
The marginal signal distribution under question $\ell$ is
\[
S_{\ell}(\sigma)\ :=\ \sum_{\theta\in\Theta} Q_{\ell}(\sigma \mid \theta)\,\D(\theta),\qquad \sigma\in\Sigma_\ell.
\]

Upon observing $\sigma$, the posterior over states is
\[
\D_{\ell,\sigma}(\theta)\ := \frac{\D(\theta)\,Q_{\ell}(\sigma \mid \theta)}{S_{\ell}(\sigma)},\qquad \theta \in \Theta, S_\ell(\sigma)>0.
\]
Therefore, upon observing signal $\sigma$ from question $\ell$, the posterior expected conversion rate for advertiser $i$ is
\[
\alpha_i(\ell,\sigma)\ :=\ \sum_{\theta\in\Theta} \alpha_i(\theta)\,\D_{\ell,\sigma}(\theta).
\]

We call $v_i\,\alpha_i(\ell,\sigma)$ the \emph{effective value} of advertiser $i$ given $(\ell,\sigma)$.

\begin{example}[Running shoes (continuation of \Cref{ex:running-shoes})]\label{ex:running-shoes-formal}
Recall that $\Theta=\{(\theta_{\text{terr}},\theta_{\text{exp}})\in\{0,1\}^2\}$ with uniform prior $\D$. We have two advertisers with private base values $v_1=50$, $v_2=30$, and public conversion rates:
\[
\alpha_1(\theta)=
\begin{cases}
0.9 & \text{if }(\theta_{\text{terr}},\theta_{\text{exp}})=(1,1),\\
0.3 & \text{if }(\theta_{\text{terr}},\theta_{\text{exp}})=(1,0),\\
0 & \text{if }\theta_{\text{terr}}=0,
\end{cases}
\]
\[
\alpha_2(\theta)=
\begin{cases}
0.8 & \text{if }(\theta_{\text{terr}},\theta_{\text{exp}})=(0,0),\\
0.4 & \text{if }(\theta_{\text{terr}},\theta_{\text{exp}})=(0,1),\\
0 & \text{if }\theta_{\text{terr}}=1.
\end{cases}
\]

There are two questions:
\begin{enumerate}[leftmargin=*]
    \item \textbf{Terrain question} $\ell_{\text{terr}}$ (``I am looking for trail running shoes?'') with signals $\Sigma_{\text{terr}}=\{\text{click},\text{no-click}\}$ and
            \begin{gather*}
            Q_{\text{terr}}(\text{click}\mid\theta)=\mathbf{1}\{\theta_{\text{terr}}=1\}. \\
            Q_{\text{terr}}(\text{no-click}\mid\theta)=\mathbf{1}\{\theta_{\text{terr}}=0\}.
            \end{gather*}
        Thus, $S_{\text{terr}}(\text{click})=S_{\text{terr}}(\text{no-click})=\tfrac12$.
        The posteriors are: $\D_{\text{terr},\text{click}}$ is uniform over $\{(1,1),(1,0)\}$, and
        $\D_{\text{terr},\text{no-click}}$ is uniform over $\{(0,1),(0,0)\}$.
        Hence, the posterior expected conversion rates are
                \begin{align*}
                \alpha_1(\text{terr},\text{click})   &= \tfrac{0.9+0.3}{2}=0.6, &
                \alpha_1(\text{terr},\text{no-click})&= 0, \\
                \alpha_2(\text{terr},\text{click})   &= 0,                      &
                \alpha_2(\text{terr},\text{no-click})&= \tfrac{0.4+0.8}{2}=0.6.
                \end{align*}
        Therefore, the effective value of  advertiser 1 is $50\cdot 0.6=30$ when the signal is ``click,'' and $0$ otherwise. The effective value of  advertiser 2 is $0$ when the signal is ``click,'' and $30\cdot 0.6=18$ otherwise. Therefore, ex-ante, $\ell_{\text{terr}}$ induces a lottery that with probability $1/2$ gives effective values $30$ and $0$, and with probability $1/2$ the effective values are $0$ and $18$.
    \item \textbf{Targeted question} $\ell_{\text{tgt}}$ (``I am an experienced trail runner'') with signals $\Sigma_{\text{tgt}}=\{\text{click},\text{no-click}\}$ and
            \begin{gather*}
            Q_{\text{tgt}}(\text{click}\mid\theta)=\mathbf{1}\{\theta=(1,1)\}. \\
            Q_{\text{tgt}}(\text{no-click}\mid\theta)=1-Q_{\text{tgt}}(\text{click}\mid\theta).
            \end{gather*}
            Thus $S_{\text{tgt}}(\text{click})=\tfrac14$ and $S_{\text{tgt}}(\text{no-click})=\tfrac34$. The posteriors are: $\D_{\text{tgt},\text{click}}$ is a point mass on $(1,1)$, and
            $\D_{\text{tgt},\text{no-click}}$ is uniform over $\{(1,0),(0,1),(0,0)\}$.
            Hence the posterior expected conversion rates are
            \[
            \begin{aligned}
            \alpha_1(\text{tgt},\text{click})   &= 0.9, &
            \alpha_1(\text{tgt},\text{no-click})&= \tfrac{0.3+0+0}{3}=0.1,\\
            \alpha_2(\text{tgt},\text{click})   &= 0,   &
            \alpha_2(\text{tgt},\text{no-click})&= \tfrac{0+0.4+0.8}{3}=\tfrac{1.2}{3}=0.4.
            \end{aligned}
            \]
            Therefore, the effective value of advertiser 1 is $50\cdot 0.9=45$ when the signal is ``click,'' and $50 \cdot 0.1 = 5$ otherwise. The effective value of advertiser 2 is $0$ when the signal is ``click,'' and $30\cdot 0.4=12$ otherwise. Therefore, ex-ante, $\ell_{\text{tgt}}$ induces a lottery that with probability $1/4$ gives effective values $45$ and $0$, and with probability $3/4$ the effective values are $5$ and $12$.
\end{enumerate}

\end{example}

After the question is presented, and the signal is randomly drawn and revealed to all parties, the platform can allocate the item.

Overall, an instance of our problem is parameterized by: (1) the public prior distribution $\D$ over states, (2) public conditional distributions $Q_{\ell}(\sigma \mid \theta)$ for every question $\ell$, signal $\sigma$, and state $\theta$,
(3) private base values $v_i$ for each advertiser $i$, and (4) public conversion rates $\alpha_i(\theta)$ for each advertiser $i$ and state $\theta$.

Given an instance, the timing of events is as follows:
\begin{enumerate}
    \item Nature samples the item's state $\theta$ from $\D$. $\theta$ is not revealed to the platform or the advertisers.
    \item The advertisers possibly interact with the platform.
    \item The platform presents a question $\ell$.
    \item Nature draws a signal $\sigma$ from $Q_{\ell}(. \mid \theta)$, and reveals it to the platform and the advertisers.
    \item The advertisers possibly interact with the platform.
    \item The platform allocates the item to one of the advertisers (possibly in a randomized way), and charges payments.
\end{enumerate}

Given this timing, the platform picks the rules of the game, and commits to them ex ante. The rules may condition on publicly observed signals $\sigma$ and any messages solicited from advertisers in steps (2) and (5). In particular, the platform specifies (i) what input (if any) advertisers provide in steps (2) and (5); (ii) how the question $\ell$ is selected in step (3); and (iii) the allocation and payment rules used in step (6). The platform's goal is to maximize expected welfare: the expected sum, over the platform's (possibly randomized) choice of question and the induced signal, of each advertiser's base value times their (posterior) conversion rate. The goal of each advertiser is to maximize expected utility: its base value times conversion rate minus its payment.

In this paper, we study two different approaches for solving the platform problem. First, in the \emph{direct revelation} approach, the platform asks each advertiser to report their base values in step (2) (no interaction occurs in step (5)) and commits to a truthful mechanism. We define the setting in detail, and study such mechanisms, in \Cref{sec: direct}. Second, we consider \emph{modular two-stage mechanisms}: the platform commits to two auctions: one for picking the question in step (3), and one for allocating the item in step (6). Messages/bids are therefore solicited in both steps (2) and (5). Here, performance is evaluated at a Nash equilibrium. We define the setting in detail, and study such mechanisms in~\Cref{sec: modular}.

\section{Direct Revelation Mechanisms}\label{sec: direct}

In this section, we study the design of a single, end-to-end mechanism for both information acquisition and allocation. This corresponds to the direct revelation approach, where the platform commits to a set of rules, asks advertisers to report their private information (their base values), and then manages the entire process (which question to ask, who gets the item, how much to charge). The timing of a direct mechanism is as follows:

\begin{enumerate}[leftmargin=*]
    \item Nature samples the item's state $\theta$ from $\D$. $\theta$ is not revealed to the platform or the advertisers.
    \item Each advertiser $i$ submits to the platform a bid $b_i$ (possibly different than their base value $v_i$).
    \item The platform presents a question $\ell$.
    \item Nature draws a signal $\sigma$ from $Q_{\ell}(. \mid \theta)$, and reveals it to the platform and the advertisers.
    \item The platform allocates the item to one of the advertisers (possibly in a randomized way), and charges payments.
\end{enumerate}

A direct mechanism $\mathcal{M} = (\lambda, q , p)$ consists of three functions, that depend on the reported bids $b = (b_1, b_2, \dots, b_n)$, from the single interaction with the advertisers in step (2) above:
\begin{itemize}[leftmargin=*]
    \item The \emph{suggestion rule} $\lambda(b)$ is a probability distribution over the available questions; we write $\lambda_{\ell}(b)$ for the probability that the mechanism picks question $\ell$. It holds that $\sum_{\ell = 1}^k \lambda_{\ell}(b) = 1$, for all $b$.
    \item The \emph{allocation rule} $q(b, \sigma)$: For a given report $b$ and a realized signal $\sigma$ (drawn from $Q_{\ell}(. \mid \theta)$, where $\ell$ is the question sampled from $\lambda(b)$), $q_i(b, \sigma)$ is the probability that advertiser $i$ receives the item. It holds that $\sum_{i = 1}^n q_i(b, \sigma) \leq 1$, for all $b$ and $\sigma$. 
    \item The \emph{payment rule} $p(b, \sigma)$: For a given report $b$ and a realized signal $\sigma$, $p_i(b, \sigma)$ is the payment charged to advertiser $i$.
\end{itemize}

For deterministic mechanisms, $\lambda_{\ell}(b)$ and $q_i(b,\sigma)$ are indicators for $\ell$ being chosen/advertiser $i$ winning the item.

Let $x_i(b) = E_{\ell \sim \lambda(b), \sigma \sim S_\ell(.)}[ \alpha_i(\ell, \sigma) \cdot q_i(b, \sigma) ]$ be the expected \emph{delivered} conversion rate for advertiser $i$, where the expectation is the randomness of the mechanism and the random signal. Intuitively, if $\alpha_i(\ell, \sigma)$ is a click-through rate, $x_i(b)$ is the ``expected number of clicks'' allocated to advertiser $i$. Similarly, slightly overloading notation, let $p_i(b) = E_{\ell \sim \lambda(b), \sigma \sim S_\ell(.)}[ p(b, \sigma) ]$ be the expected payment of advertiser $i$.
The interim utility of advertiser $i$, with a true base value $v_i$, who submits a bid $b_i$, and others report $b_{-i}$ is therefore $U_i(v_i \rightarrow b_i; b_{-i}) = v_i \cdot x_i(b_i; b_{-i}) - p_i(b_i; b_{-i})$. 
Our notion of truthfulness is \emph{dominant strategy incentive compatibility}. A mechanism is dominant strategy incentive compatible (DSIC) if truthtelling is a dominant strategy: $U_i(v_i \rightarrow v_i; b_{-i}) \geq U_i(v_i \rightarrow b_i; b_{-i})$ for all $v_i, b_i$ and $b_{-i}$. We are interested in designing a DSIC mechanism that maximizes expected social welfare.

\subsection{Results}

Our first result establishes that the welfare-maximizing direct mechanism is truthful. This is achieved by defining the suggestion and allocation rules to greedily maximize the reported welfare, and then applying the classic Vickrey-Clarke-Groves (VCG) payment structure.

\begin{lemma}\label{lem: vcg works}
The direct mechanism $\mathcal{M}^* = (\lambda^*, q^*, p^{VCG})$ that maximizes expected social welfare is Dominant Strategy Incentive Compatible (DSIC), where: 
\begin{itemize}[leftmargin=*]
    \item The suggestion rule $\lambda^*(b)$ deterministically chooses the question $\ell^* = \ell^*(b)$ that maximizes the ex-ante expected welfare, i.e., $\ell^* = \operatorname*{argmax}_{\ell} E_{\sigma \sim S_{\ell}(.) } [ max_{j=1,\dots,n} b_j \cdot \alpha_j(\ell, \sigma) ]$
    \item The allocation rule $q^*(b, \sigma)$ allocates the item to the advertiser $i^*$ with the highest \emph{posterior} value after suggestion $\ell^*(b)$ is presented, and signal $\sigma$ is observed $i^*(b, \sigma) = \operatorname*{argmax}_{j \in \{1,\dots,n\}} b_j \cdot \alpha_j(\ell^*(b), \sigma).$ That is, $q_{i^*} = 1$ and $q_j = 0$ for $j \neq i^*$.
    \item The payment rule $p^{VCG}$ charges each advertiser $i$ the externality they impose on all other advertisers:  the expected welfare of the optimal solution without $i$ in the instance, minus the expected welfare of everyone other than $i$ in the optimal solution with $i$.
\end{itemize}
\end{lemma}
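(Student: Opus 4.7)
The plan is to reduce this to the standard VCG truthfulness argument, with the only (mild) novelty being verification that $(\lambda^*, q^*)$ jointly maximizes the expected reported welfare across \emph{both} the choice of question and the post-signal allocation. Once this welfare-maximization property is established, the classical VCG payment analysis immediately yields dominant-strategy incentive compatibility.

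First, for any suggestion rule $\lambda$ and allocation rule $q$, I would define the expected reported welfare
$$W_{\lambda,q}(b) \;=\; \sum_{\ell} \lambda_\ell(b) \sum_{\sigma \in \Sigma_\ell} S_\ell(\sigma) \sum_j b_j\, \alpha_j(\ell,\sigma)\, q_j(b,\sigma),$$
and show that $(\lambda^*, q^*)$ maximizes $W_{\lambda,q}(b)$ pointwise in $b$ by a two-step decomposition. For any fixed $\ell$ and $\sigma$, the innermost sum is maximized by the greedy allocation $q^*$, giving $\max_j b_j \alpha_j(\ell,\sigma)$; taking expectation over $\sigma \sim S_\ell$ yields a per-question expected welfare of $E_{\sigma \sim S_\ell(.)}[\max_j b_j \alpha_j(\ell,\sigma)]$. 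Since the signal distribution $S_\ell$ does \emph{not} depend on bids, the optimal $\lambda$ deterministically picks the question maximizing this quantity, which is exactly $\ell^*(b)$. Writing $x_j(b)$ for advertiser $j$'s expected delivered conversion under $(\lambda^*, q^*)$, the optimized welfare is $W^*(b) := \sum_j b_j\, x_j(b)$.

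Next, I would carry out the standard VCG analysis. The VCG payment of advertiser $i$ can be written as $p_i^{VCG}(b) = W^*_{-i}(b_{-i}) - \sum_{j \neq i} b_j\, x_j(b)$, where $W^*_{-i}(b_{-i})$ is the maximum expected welfare in the subinstance that excludes $i$. Advertiser $i$'s interim utility, with true base value $v_i$ and bid $b_i$, simplifies to
$$U_i(v_i \to b_i; b_{-i}) \;=\; \Big(v_i\, x_i(b_i, b_{-i}) + \sum_{j \neq i} b_j\, x_j(b_i, b_{-i})\Big) \;-\; W^*_{-i}(b_{-i}).$$
The second term is independent of $b_i$. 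The first term is the reported welfare \emph{evaluated at} $(v_i, b_{-i})$ but using the mechanism's choice of $(\lambda^*, q^*)$ for reports $(b_i, b_{-i})$; since $(\lambda^*, q^*)$ at input $(v_i, b_{-i})$ maximizes the reported welfare for that profile, this term is at most $W^*(v_i, b_{-i})$, with equality attained by $b_i = v_i$. Hence truthful reporting is a dominant strategy.

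The main obstacle---and it is genuinely minor---is keeping the nested randomness (over the mechanism's question, the signal, and the allocation) straight in the welfare accounting, and articulating carefully that the feasible set of mechanisms over which $(\lambda^*, q^*)$ is optimal includes every $(\lambda, q)$ that conditions $\lambda$ on $b$ and $q$ on $(b, \sigma)$. The key conceptual lever is that $S_\ell$ depends only on the question, not on bids: this lets the expected welfare decouple cleanly across $\ell$ and $\sigma$, so the classical VCG argument applies without modification.
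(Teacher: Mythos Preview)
Your proposal is correct and follows essentially the same approach as the paper's proof: both establish that $(\lambda^*, q^*)$ maximizes expected reported welfare via the same two-step decomposition (first greedily allocate per $(\ell,\sigma)$, then pick the best $\ell$), and then apply the standard VCG utility calculation to conclude that truthtelling is dominant. Your explicit remark that $S_\ell$ is bid-independent is a useful clarification that the paper leaves implicit, but otherwise the arguments are the same.
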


Before proving the lemma, we give a simple example of how the mechanism works.

\begin{example}[Mechanism $\mathcal{M}^*$ applied to~\Cref{ex:running-shoes-formal}]
Consider applying $\mathcal{M}^*$ to the instance from~\Cref{ex:running-shoes-formal} given (truthful) bids $b_1=50$, $b_2=30$. Given these bids, the mechanism first chooses a question to present.

\paragraph{Choosing the question $\ell^*(b)$.}
As discussed in~\Cref{ex:running-shoes-formal}, question $\ell_{\text{terr}}$ induces a lottery that, with probability $1/2$, gives effective values (base value times posterior expected conversion rate) $30$ and $0$, and with probability $1/2$, the effective values are $0$ and $18$. Therefore, the expected welfare under $\ell_{\text{terr}}$ is $\frac{1}{2} \cdot 30 + \frac{1}{2} \cdot 18 = 24$.
On the other hand, $\ell_{\text{tgt}}$ induces a lottery that, with probability $1/4$, gives effective values $45$ and $0$, and with probability $3/4$, the effective values are $5$ and $12$. Therefore, the expected welfare under $\ell_{\text{tgt}}$ is $\frac{1}{4} \cdot 45 + \frac{3}{4} \cdot 12 = 20.25$.
Hence the mechanism picks $\ell^*(b) = \ell_{\mathrm{terr}}$.

\paragraph{Allocating the item.}
Since $\ell_{\text{terr}}$ is picked, then with probability $1/2$ the signal is ``click,'' the induced effective values are $30$ and $0$, and advertiser 1 receives the item. Otherwise, the signal is ``no-click,'' the induced values are $0$ and $18$, and advertiser 2 receives the item.

\paragraph{Payments.}
The expected value of advertiser 1 in the current solution is $15$: they receive the item when the signal from asking $\ell_{\text{terr}}$ is ``click'' (this is a probability $1/2$ event), and their effective value is $30$ when this happens. For advertiser 2, their expected value is $9$: they receive the item when the signal from asking $\ell_{\text{terr}}$ is ``no-click'' (this is a probability $1/2$ event), and their effective value is $18$ when this happens.

What would happen if advertiser 1 did not participate? In that case, the question would be picked to optimize just for advertiser 2, who is indifferent between the two questions, since its the expected value is (still) $9$ under both choices. This means that advertiser 1 does not impose any externality, so its payment is zero.

What would happen if advertiser 2 did not participate? Similarly, the question would be picked to optimize just for advertiser 1, who is also indifferent between the two questions: $\ell_{\text{terr}}$ gives an expected value of $30/2 = 15$, and $\ell_{\text{tgt}}$ gives an expected value of $45/4 + 5 \cdot 3/4 = 15$. Therefore that advertiser 2 does not impose any externality, so its payment is also zero.
\end{example}

We will now prove Lemma~\ref{lem: vcg works} which shows that mechanism $\mathcal{M}^*$ is DSIC.
\begin{proof}[Proof of~\Cref{lem: vcg works}]
Fix bids $b$. For any question $\ell$ and signal $\sigma$, the reported welfare from allocating to $i$ is $b_i\,\alpha_i(\ell,\sigma)$. Hence, the maximum reported welfare for fixed $\ell$ is $E_{\sigma\sim S_\ell}\!\Big[\max_j b_j\,\alpha_j(\ell,\sigma)\Big]$,
attained by the allocation rule $q^*(b, \sigma)$, by allocating, for each realized $\sigma$, to a maximizer of $b_j\,\alpha_j(\ell,\sigma)$. Optimizing over $\ell$, gives the suggestion rule $\lambda^*(b)$ above; therefore, our rules maximize expected reported welfare over all feasible outcomes (choice of $\ell$ and signal-contingent allocation). 
Define
\[
W_{-i}(b_{-i})\ :=\ \max_{\ell\in[k]} E_{\sigma\sim S_\ell}\!\Big[\max_{j\neq i} b_j\,\alpha_j(\ell,\sigma)\Big].
\]
Advertiser $i$'s payment is then:
\[
p_i^{VCG}(b)\ =\ W_{-i}(b_{-i}) - E_{\sigma\sim S_{\ell^*(b)}} \Big[\sum_{j\neq i} b_j \, \alpha_j\big(\ell^*(b),\sigma\big) \, q_j^*(b,\sigma)\Big].
\]
Therefore, $i$'s expected utility when its true base value is $v_i$ and it reports $b_i$ is:
$U_i(v_i \rightarrow b_i; b_{-i}) = E_{\sigma\sim S_{\ell^*(b)}}\Big[v_i \, \alpha_i\big(\ell^*(b),\sigma\big)\,q_i^*(b,\sigma) + \sum_{j\neq i} b_j\,\alpha_j\big(\ell^*(b),\sigma\big)\,q_j^*(b,\sigma)\Big]\ -\ W_{-i}(b_{-i})$.
Since $(\lambda^*,q^*)$ selects, for each $b$, the outcome that maximizes $E_{\sigma\sim S_{\ell^*(b)}}[\sum_j b_j \alpha_j\big(\ell^*(b),\sigma \big)]$, the first term in $U_i(v_i \rightarrow b_i; b_{-i})$ is maximized when $b_i=v_i$. The second term ($W_{-i}(b_{-i})$) does not depend on $i$'s report; therefore, $b_i = v_i$ is a dominant strategy.
\end{proof}

The next lemma gives a simplified way of calculating the VCG payments in our problem. Specifically, the VCG payment of advertiser $i$ can be decomposed into (1) an externality from stage 1, plus (2) an expected second price payment.

\begin{lemma}
Let $\mathrm{SW}_{-i}(\ell):=E_{\sigma\sim S_\ell}\!\big[\max_{j\neq i} b_j\,\alpha_j(\ell,\sigma)\big]$. Let $\ell^* = \ell^*(b)$ be the question chosen by $\mathcal{M}^* = (\lambda^*, q^*, p^{VCG})$. Then
\[
\begin{aligned}
p_i^{VCG}(b)
&=\underbrace{\big(\max_{\ell}\mathrm{SW}_{-i}(\ell)-\mathrm{SW}_{-i}(\ell^*)\big)}_{\text{Stage--1 externality}} +
\underbrace{E_{\sigma\sim S_{\ell^*}}\!\Big[\max_{j\neq i} b_j\,\alpha_j(\ell^*,\sigma)\,q_i^*(b,\sigma)\Big]}_{\text{expected Stage--2 second price payment}}.
\end{aligned}
\]
\end{lemma}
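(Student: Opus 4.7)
The plan is to start from the expression for $p_i^{VCG}(b)$ derived in the proof of \Cref{lem: vcg works},
\[
p_i^{VCG}(b) = W_{-i}(b_{-i}) - E_{\sigma\sim S_{\ell^*}} \Big[\sum_{j\neq i} b_j\,\alpha_j(\ell^*,\sigma)\,q_j^*(b,\sigma)\Big],
\]
and simply add and subtract $\mathrm{SW}_{-i}(\ell^*)$. Since $W_{-i}(b_{-i}) = \max_\ell \mathrm{SW}_{-i}(\ell)$ by definition, this immediately produces the stage-1 externality term $\max_\ell \mathrm{SW}_{-i}(\ell) - \mathrm{SW}_{-i}(\ell^*)$ plus a residual:
\[
R \;:=\; \mathrm{SW}_{-i}(\ell^*) \;-\; E_{\sigma\sim S_{\ell^*}}\!\Big[\sum_{j\neq i} b_j\,\alpha_j(\ell^*,\sigma)\,q_j^*(b,\sigma)\Big].
\]
So the entire content of the lemma reduces to showing $R$ equals the claimed expected second price payment.

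For this, I would unfold $\mathrm{SW}_{-i}(\ell^*) = E_{\sigma\sim S_{\ell^*}}\!\big[\max_{j\neq i} b_j\,\alpha_j(\ell^*,\sigma)\big]$ and then do a pointwise (in $\sigma$) case analysis on whether $i$ is the winner under the allocation rule $q^*(b,\sigma)$. If $q_i^*(b,\sigma) = 1$, then $q_j^*(b,\sigma) = 0$ for all $j \neq i$, so the subtracted sum vanishes and the integrand equals $\max_{j\neq i} b_j\,\alpha_j(\ell^*,\sigma)$, which is exactly $\max_{j\neq i} b_j\,\alpha_j(\ell^*,\sigma)\,q_i^*(b,\sigma)$. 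If $q_i^*(b,\sigma) = 0$, then some $j^* \neq i$ is the overall argmax, so $\max_{j\neq i} b_j\,\alpha_j(\ell^*,\sigma) = b_{j^*}\,\alpha_{j^*}(\ell^*,\sigma) = \sum_{j\neq i} b_j\,\alpha_j(\ell^*,\sigma)\,q_j^*(b,\sigma)$, making the integrand zero, again matching $\max_{j\neq i} b_j\,\alpha_j(\ell^*,\sigma)\,q_i^*(b,\sigma) = 0$. Combining the two cases inside the expectation then gives $R = E_{\sigma\sim S_{\ell^*}}\!\big[\max_{j\neq i} b_j\,\alpha_j(\ell^*,\sigma)\,q_i^*(b,\sigma)\big]$, which is the desired stage-2 term.

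There is no genuine obstacle here; the decomposition is essentially algebraic and the only observation needed is the pointwise identity $\max_{j\neq i} b_j\,\alpha_j(\ell^*,\sigma) - \sum_{j\neq i} b_j\,\alpha_j(\ell^*,\sigma)\,q_j^*(b,\sigma) = \max_{j\neq i} b_j\,\alpha_j(\ell^*,\sigma)\,q_i^*(b,\sigma)$, which follows from the two-case argument above (and implicitly uses that if $i$ does not win, the argmax over $j \neq i$ coincides with the overall argmax). The only minor care point is handling ties in the argmax consistently with whatever tie-breaking rule $q^*$ uses, but since we always compare $\max_{j \neq i}$ against the selected winner's value, any tie-breaking rule works identically.
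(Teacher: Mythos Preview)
Your proposal is correct and essentially identical to the paper's proof: both start from the VCG payment expression in \Cref{lem: vcg works}, add and subtract $\mathrm{SW}_{-i}(\ell^*)$, and reduce the residual to the second-price term via the pointwise identity $\max_{j\neq i} b_j\alpha_j(\ell^*,\sigma) - \sum_{j\neq i} b_j\alpha_j(\ell^*,\sigma)\,q_j^*(b,\sigma) = \max_{j\neq i} b_j\alpha_j(\ell^*,\sigma)\,q_i^*(b,\sigma)$. The only cosmetic difference is that the paper establishes this identity by writing $\sum_{j\neq i} b_j\alpha_j\,q_j^* = s_i(\sigma)(1-q_i^*)$ (splitting on whether $s_i(\sigma)>0$), whereas you split on whether $q_i^*(b,\sigma)=1$ or $0$; both are valid since $q^*$ is deterministic in \Cref{lem: vcg works}.
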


\begin{proof}
Starting from the definition of $p_i^{VCG}(b)$ we have
\[
p_i^{VCG}(b)
= \max_{\ell}\mathrm{SW}_{-i}(\ell)\;-\;
E_{\sigma\sim S_{\ell^*}}\!\Big[\sum_{j\neq i} b_j\,\alpha_j(\ell^*,\sigma)\,q_j^*(b,\sigma)\Big]. \tag{1}
\]

Let $s_i(\sigma):=\max_{j\neq i} b_j\alpha_j(\ell^*,\sigma)$.
For every realized $\sigma$, the welfare-maximizing $q^*(b,\sigma)$ allocates only to maximizers of
$b_j\alpha_j(\ell^*,\sigma)$. Therefore, for every signal $\sigma$
\[
\sum_{j\neq i} b_j\alpha_j(\ell^*,\sigma)\,q_j^*(b,\sigma)
= s_i(\sigma)\sum_{j\neq i} q_j^*(b,\sigma).
\]
If $s_i(\sigma)>0$, since $\sum_j q_j^*(b,\sigma)=1$, we have $\sum_{j\neq i} q_j^*(b,\sigma)=1-q_i^*(b,\sigma)$.
If $s_i(\sigma)=0$, both sides are $0$ regardless of $q^*$. Therefore, for all $\sigma$,
\[
\sum_{j\neq i} b_j\alpha_j(\ell^*,\sigma)\,q_j^*(b,\sigma)
= s_i(\sigma)\,\big(1-q_i^*(b,\sigma)\big).
\tag{2}
\]
Taking an expectation in (2) and substituting into (1) gives
\[
\begin{aligned}
&p_i^{VCG}(b)
=\max_{\ell}\mathrm{SW}_{-i}(\ell)
- E_{\sigma\sim S_{\ell^*}}[s_i(\sigma)]
+ E_{\sigma\sim S_{\ell^*}}[s_i(\sigma)\,q_i^*(b,\sigma)]\\
&\quad\quad=\underbrace{\big(\max_\ell \mathrm{SW}_{-i}(\ell)-\mathrm{SW}_{-i}(\ell^*)\big)}_{\text{Stage--1 externality}}
+
\underbrace{E_{\sigma\sim S_{\ell^*}}\!\big[s(\sigma)\,q_i^*(b,\sigma)\big]}_{\text{expected Stage--2 second price}}.
\end{aligned}
\]
\end{proof}
\section{Modular Mechanisms: Decoupling Questions and Allocation}\label{sec: modular}

In contrast to the end-to-end approach, in this section we study \emph{modular mechanisms} where the platform decouples the problem into two distinct stages. First, it runs an auction to decide which question to present. Second, after the signal from that question is publicly observed, it runs a separate auction to allocate the item. This approach is simpler to implement, but requires advertisers to act strategically.
The timing of a modular mechanism is as follows:

\begin{enumerate}[leftmargin=*]
    \item Nature samples the item's state $\theta$ from $\mathcal{D}$. $\theta$ is not revealed to the platform or the advertisers.
    \item Each advertiser $i$ submits a vector of bids $b_i^Q = (b_{i,1}^Q, \dots, b_{i,k}^Q)$, where $b_{i,\ell}^Q$ is their bid for having question $\ell$ presented.
    \item The platform selects a question $\ell$ and charges payments based on the bids $\{b_i^Q\}_{i=1}^n$.
    \item Nature draws a signal $\sigma$ from $Q_{\ell}(. | \theta)$, and reveals it to the platform and the advertisers.
    \item Each advertiser $i$ submits a bid $b_i^A$ for the item.
    \item The platform allocates the item and charges payments based on the bids $\{b_i^A\}_{i=1}^n$.
\end{enumerate}

Analytically, we will treat this as a \textbf{one-shot normal-form game}: before play begins, each advertiser commits to a contingent plan for both stages. That is, advertisers \emph{simultaneously} submit (i) a bid for each question and (ii) a bidding function for the item, mapping $(\ell, \sigma)$ to a stage-2 bid. The subsequent ``stages'' are merely the platform executing those plans after the signal is realized. Later in this section, we prove that a certain behavior is a Nash equilibrium under a certain modular mechanism in this normal form game; we also note, however, that the same behavior is a subgame-perfect equilibrium for the corresponding sequential game.

A modular mechanism $\mathcal{M} = (\mathcal{M}_Q, \mathcal{M}_A)$ is composed of two separate mechanisms. The stage 1 mechanism, $\mathcal{M}_Q$, takes as input bids $\{b_i^Q\}$, and consists of: (i) a \emph{question selection rule} $\lambda(b^Q)$, which gives a probability distribution over the available questions; we write $\lambda_{\ell}(b^Q)$ for the probability that the mechanism picks question $\ell$ and have that $\sum_{\ell} \lambda_{\ell}(b^Q) = 1$. (ii) a stage 1 payment rule $p^Q(b^Q)$ that charges a payment $p^Q_i(b^Q)$ to each advertiser $i$.
The stage 2 mechanism, $\mathcal{M}_A$, takes as input bids $\{b_i^A\}$, and consists of: (i) an \emph{item selection rule} $q(b^A; \ell, \sigma)$, that allocates the item to advertiser $i$ with probability $q_i(b^A; \ell, \sigma)$, and (ii) a stage 2 payment rule $p^A(b^A; \ell, \sigma)$ that charges each advertiser $i$ a payment $p_i^A(b^A; \ell, \sigma)$.

We evaluate modular mechanisms in their worst-case \emph{Nash} equilibria. A strategy $s_i$ for an advertiser $i$ specifies what they will do at every interaction point. That is, a strategy $s_i$ consists of (1) a Stage 1 bidding vector $b^Q_i = (b^Q_{i,1}, \dots, b^Q_{i,k})$, which specifies their bid for each question $\ell$, and (2) a bidding function $b^A_i(\ell, \sigma)$ which specifies what they will bid for the item for every possible question $\ell$ and signal $\sigma$ that could be revealed. Let $s = (s_1, \dots, s_n)$ be a strategy profile for the advertisers. This profile determines the bids in both stages, and thus the final outcome and advertisers' utilities. Let $U_i(s_i; s_{-i})$ be the expected utility for advertiser $i$ when they play strategy $s_i$ and all other advertisers play according to strategies in $s_{-i}$, where the expectation is over the strategies of others, the environment (random state of the item, signal), and randomness in the modular mechanism. A strategy profile $s^*$ is a Nash equilibrium if no advertiser has a profitable unilateral deviation; that is, if for every advertiser $i$, and any strategy $s_i$, $U_i(s^*_i; s^*_{-i}) \geq U_i(s_i; s^*_{-i})$.

We compare the social welfare of the Nash equilibrium to that of the optimal allocation assuming full information. Note that the optimal allocation is an example of what the direct revelation mechanism obtains. The worst case inefficiency of a Nash equilibrium in a model is captured by a concept called \emph{price of anarchy}. Price of anarchy is thus the ratio of the social welfare of the optimal direct revelation outcome to that of the worst-case Nash equilibrium.

\subsection{A Natural Modular Mechanism: VCG per Stage}\label{sec: vcg per stage}

We instantiate the general modular framework with a specific, natural choice for the two mechanisms. Specifically, we suggest running VCG for each stage separately.

Working backwards, in the second stage, arguably the most natural choice is VCG, which boils down to a second price auction on ``effective'' values. Concretely, after a suggestion $\ell$ has been chosen (in step (3)) and a signal $\sigma$ has been publicly observed (in step (4)), the platform holds the following auction: allocate the item to the advertiser with the highest \emph{effective} bid, i.e. the advertiser $i^*$ with the highest $b^A_i \cdot \alpha_i(\ell,\sigma)$, and charge them the second-highest effective bid $p^A_{i^*} = \max_{j \neq i^*} b_j^A \cdot \alpha_j(\ell,\sigma)$. 

Anticipating the truthful and efficient outcome of the second stage auction, each advertiser can calculate its expected utility for any given question $\ell$. Concretely, for a fixed question $\ell$, an advertiser can simulate a draw $\sigma$ from $S_{\ell}(.)$, and the resulting payoff from the subsequent second price auction. This induces a well-defined expected utility $R_i(\ell; v_i)$, for each advertiser $i$ and question $\ell$.
Therefore, the stage 1 problem boils down to picking a mechanism for the following ``public projects'' problem: there are $k$ projects to choose from (the questions), and $n$ agents (the advertisers), each with a private value $R_i(\ell; v_i)$ for each project. For this problem, VCG is again the natural, truthful solution: pick the project (question) that maximizes social welfare, and charge each agent its externality. Concretely, given a bid $b^Q_{i, \ell}$ from each advertiser $i$ for each question $\ell$: present the question $\ell^*$ with maximum sum of bids, i.e. $\ell^* = \operatorname*{argmax}_{\ell} \sum_{i=1}^n b^Q_{i, \ell}$, and charge each advertiser $i$ its externality $p^Q_i = \left( \max_{\ell} \sum_{j \neq i} b^Q_{j,\ell} \right) - \left( \sum_{j \neq i} b^Q_{j, \ell^*} \right)$.

\begin{example}
Consider the instance from \Cref{ex:running-shoes-formal}.

Assume that advertiser 1 submits:
\begin{itemize}[leftmargin=*]
\item Stage 1 bids: $b^Q_{1,\text{terr}}=21$, \quad $b^Q_{1,\text{tgt}} = 20$.
\item Stage 2 bidding function: $b_1^A(\ell,\sigma)=v_1=50$ for all $(\ell,\sigma)$.
\end{itemize}
Assume that advertiser 2 submits:
\begin{itemize}[leftmargin=*]
\item Stage 1 bids: $b^Q_{2,\text{terr}}=9$, \quad $b^Q_{2,\text{tgt}}=12$.
\item Stage 2 bidding function: $b_2^A(\ell,\sigma)=v_2=30$ for all $(\ell,\sigma)$.
\end{itemize}

Then the mechanism proceeds as follows. First, since $\sum_i b^Q_{i,\text{tgt}} = 20 + 12 > 21 + 9 = \sum_i b^Q_{i,\text{terr}}$, the question chosen in the first stage is $\ell^* = \ell_{\text{tgt}}$. To compute the stage 1 VCG externalities for advertiser $1$ we have:
\(\max_\ell b^Q_{2,\ell}=\max\{ 9 , 12\}= 12\) and
\( b^Q_{2,\ell^*}= 12\), so \(p_1^Q=0\). To compute the stage 1 VCG externalities for advertiser $2$ we have: \(\max_\ell b^Q_{1,\ell}=\max\{ 21, 20 \}= 21\) and
\( b^Q_{1,\ell^*}=20\), so \(p_2^Q = 1 \).

Then, a signal is realized in stage 2. 
Since $\ell_{\text{tgt}}$ was chosen:
\begin{itemize}[leftmargin=*]
\item If $\sigma=\text{click}$ (prob.\ $1/4$), the effective bids are
$(b_1^A\alpha_1,b_2^A\alpha_2)=(50\cdot 0.9,\ 30\cdot 0)=(45,0)$.
Advertiser~1 wins. The second-highest effective bid is $0$, so the
Stage~2 payment is \(p_1^A=0\).
\item If $\sigma=\text{no-click}$ (prob.\ $3/4$), the effective bids are
$(50\cdot 0.1,\ 30\cdot 0.4)=(5,12)$. Advertiser~2 wins and pays the second-highest effective bid \(5\), so \(p_2^A=5\).
\end{itemize}

\end{example}

\subsection{Results}

We first prove that the natural mechanism from~\Cref{sec: vcg per stage}, VCG per stage, has a natural pure Nash equilibrium: every advertiser bids their (true) expected utility for each question, as well as their true base value. We defer the proof of~\Cref{thm: pure Nash} to Appendix~\ref{app: missing}.

\begin{theorem}[Existence of a pure Nash equilibrium]\label{thm: pure Nash}
Consider the modular mechanism $\mathcal{M}=(\mathcal{M}^{VCG}_Q,\mathcal{M}^{VCG}_A)$ where $\mathcal{M}^{VCG}_A$ is the (per-signal) VCG auction on effective values and $\mathcal{M}^{VCG}_Q$ is VCG for choosing the question, as defined in~\Cref{sec: vcg per stage}. Define, for each advertiser $i$ and question $\ell$,
\[
R_i(\ell;v)\ :=\ E_{\sigma\sim S_\ell}\!\Big[\,u_i^{A}\big(v;\ell,\sigma\big)\,\Big],
\]
where $u_i^{A}(v;\ell,\sigma)$ is $i$’s (quasilinear) utility in $\mathcal{M}^{VCG}_A$ when the realized question is $\ell$, the realized signal is $\sigma$, and everyone bids their base value in stage 2. Then the strategy profile
\[
b_i^{A} (\ell,\sigma) = v_i\quad\text{for all }i, \ell, \sigma,\qquad
b_{i,\ell}^{Q}=R_i(\ell;v)\quad\text{for all }i,\ \ell
\]
is a pure Nash equilibrium.
\end{theorem}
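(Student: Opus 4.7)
The plan is to show that no unilateral deviation by any advertiser $i$ from the proposed strategy profile strictly increases $i$'s expected utility. Since each strategy in this normal-form game combines a stage-1 bid vector and a stage-2 bidding function $(\ell,\sigma)\mapsto b_i^A(\ell,\sigma)$, I would decompose any candidate deviation into these two components and handle each by appealing to dominant-strategy truthfulness of VCG applied to the appropriate subproblem. The argument proceeds in two steps: first I would pin down that the stage-2 component of $i$'s best response is truthful independently of what $i$ does in stage 1; then I would use that fact to reduce stage 1 to a standard VCG-for-public-projects instance on the values $R_i(\ell;v)$.

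\textbf{Step 1: truthful stage-2 bidding is optimal regardless of stage-1 play.} Fix an arbitrary stage-1 plan $b_i^Q$ for advertiser $i$, and condition on any realized question $\ell$ and signal $\sigma$. Because all opponents play their equilibrium stage-2 strategy, namely $b_j^A(\ell,\sigma)=v_j$, advertiser $i$ is facing a single-item second-price auction on the effective values $\{b_j^A\cdot\alpha_j(\ell,\sigma)\}_j$. The classical dominant-strategy argument for second-price auctions implies that setting $b_i^A(\ell,\sigma)=v_i$ maximizes $i$'s conditional utility \emph{pointwise} for every realization $(\ell,\sigma)$; taking expectation over $(\ell,\sigma)$ induced by the stage-1 outcome and by $Q_\ell(\cdot\mid\theta)$ then implies that truthful stage-2 bidding is optimal no matter what $b_i^Q$ is chosen. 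In particular, the optimal stage-2 plan does not co-vary with stage-1 bids, which licenses treating the two deviations separately.

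\textbf{Step 2: stage 1 collapses to VCG on values $R_i(\ell;v)$.} Having fixed the stage-2 plan as truthful, and using that opponents also bid truthfully in stage 2, the definition of $R_i(\ell;v)$ gives that $i$'s conditional expected utility given the selected question is $\ell$ equals exactly $R_i(\ell;v)$. Hence $i$'s total expected utility (net of the stage-1 payment) is
\[
U_i(b_i^Q;b_{-i}^Q)\ =\ \sum_{\ell}\lambda_\ell(b_i^Q,b_{-i}^Q)\,R_i(\ell;v)\ -\ p_i^Q(b_i^Q,b_{-i}^Q),
\]
where $(\lambda,p^Q)$ is precisely the Clarke-pivot VCG rule applied to the reports $\{b_{j,\ell}^Q\}_{j,\ell}$. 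Opponents report $b_{j,\ell}^Q=R_j(\ell;v)$ under the proposed profile, so this is an instance of VCG-for-public-projects in which $i$'s true values for the $k$ alternatives are $\{R_i(\ell;v)\}_\ell$. Dominant-strategy truthfulness of VCG then gives that bidding $b_{i,\ell}^Q=R_i(\ell;v)$ for all $\ell$ maximizes $U_i(\cdot;b_{-i}^Q)$. Combining the two steps, neither component of a unilateral deviation can be strictly profitable, so the profile is a Nash equilibrium.

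\textbf{Main obstacle.} The only real subtlety is justifying that the stage-2 component of $i$'s deviation cannot be used to ``help'' a stage-1 deviation, i.e., that the decomposition into the two steps above is lossless. This is exactly where \emph{dominant}-strategy (rather than merely best-response) truthfulness of the per-signal second-price auction is essential: because truthful stage-2 bidding is optimal pointwise for every $(\ell,\sigma)$, no joint deviation can do strictly better than optimizing stage 1 against truthful opponents while keeping the stage-2 plan truthful. Once that reduction is in place, the remainder is an application of standard VCG truthfulness, and indeed the same argument upgrades the Nash guarantee to subgame perfection for the corresponding sequential game.
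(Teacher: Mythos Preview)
Your proof is correct and follows essentially the same two-step approach as the paper: first establish that truthful stage-2 bidding is a dominant strategy pointwise in $(\ell,\sigma)$ (so any deviation can without loss keep the stage-2 plan truthful), then reduce stage~1 to VCG-for-public-projects on the values $\{R_i(\ell;v)\}_\ell$ and invoke its dominant-strategy truthfulness. Your ``main obstacle'' paragraph makes explicit the decomposition step that the paper handles with the single sentence ``it suffices to argue that no stage~1 deviation exists, with stage~2 bids $b_i^A=v_i$ fixed.''
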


Note that, as described in the first part of this section, advertisers \emph{simultaneously} submit (i) a bid for each question and (ii) a bidding function for the item, mapping $(\ell, \sigma)$ to a stage-2 bid. Suppose instead that the platform only collected base values $b_i$, and then computed stage-1 values $R_i(\ell;v)$ on the advertisers' behalf (using $v_i = b_i$). In this proxy variant, truthfully reporting $b_i = v_i$ is \emph{not} a Nash equilibrium. The proof of the following proposition is deferred to~\Cref{app: missing}.

\begin{proposition}\label{prop: proxy two stage is not truthful}
In the proxy variant, where the platform collects only base values $b_i$ and computes $R_i(\ell;b)$ to run $\mathcal{M}^{VCG}_Q$ (and uses $b_i$ in $\mathcal{M}^{VCG}_A$), reporting $b_i=v_i$ need not be a Nash equilibrium.
\end{proposition}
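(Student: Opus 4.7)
The plan is to exhibit a concrete counterexample with three advertisers, two questions, and two equiprobable states. I choose advertiser~$1$ to have a uniform conversion rate (the same effective value in both states), and advertisers~$2$ and~$3$ to be state-specific competitors with high conversion rates in complementary states. The first question $\ell_1$ perfectly reveals the state, while $\ell_2$ is uninformative. I calibrate values so that, in stage~$2$ VCG at $\ell_1$, advertisers~$2$ and~$3$ each win in their favored state at a price set by advertiser~$1$'s effective bid, and in stage~$2$ VCG at $\ell_2$, advertiser~$1$ wins outright at a price equal to the other advertisers' uninformative posterior effective bids. With suitable numbers, $\sum_j R_j(\ell_1; v) > \sum_j R_j(\ell_2; v)$, so at truth-telling the platform picks $\ell_1$, advertiser~$1$ loses in both states, and its utility is $0$.

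The key leverage is that in the proxy variant $R_j(\ell; b)$ for $j \neq 1$ still depends on $b_1$, because $b_1 \alpha_1$ is part of what sets advertiser~$j$'s stage~$2$ price. In particular, at $\ell_1$ advertiser~$1$ is the pivotal losing bidder, so advertisers~$2$ and~$3$ each pay exactly $b_1\,\alpha_1(\theta)$ when they win. Overbidding ($b_1 > v_1$) therefore shrinks $R_2(\ell_1; b) + R_3(\ell_1; b)$ linearly in $b_1$, while $R_{-1}(\ell_2; b) = 0$ is unchanged and $R_1(\ell_2; b)$ grows linearly because the platform thinks advertiser~$1$ values $\ell_2$ more. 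I then pick $b_1$ so that the platform's selection flips from $\ell_1$ to $\ell_2$, while $b_1$ stays strictly below the effective-bid threshold at $\ell_1$ so advertiser~$1$ would still strictly lose there (avoiding a winner's curse in the counterfactual stage~$2$).

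On $\ell_2$, advertiser~$1$'s actual stage~$2$ surplus is $v_1\,\alpha_1(\ell_2, \sigma) - p$, where $p$ is the second-highest effective bid; crucially $p$ does not depend on $b_1$ in the chosen range, so the surplus is a fixed positive amount. Advertiser~$1$'s stage~$1$ VCG externality equals $R_2(\ell_1; b) + R_3(\ell_1; b)$, which shrinks to zero as $b_1$ approaches the pivotal threshold from below. Choosing $b_1$ close enough to that threshold makes the externality strictly smaller than the stage~$2$ surplus, yielding strictly positive utility for advertiser~$1$ and contradicting the truth-telling Nash equilibrium hypothesis. The main obstacle is choosing parameters and $b_1$ so that every strict inequality holds and no tie-breaking ambiguity arises; picking $b_1$ slightly but strictly inside the open interval where the stage~$1$ selection has flipped but advertiser~$1$ still strictly loses at $\ell_1$ accomplishes this and completes the proof.
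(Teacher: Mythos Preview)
Your proposal is correct and follows essentially the same approach as the paper: construct a concrete instance where an advertiser, by overbidding, simultaneously inflates its own proxy value $R_i(\ell;b)$ at a preferred question and depresses the other advertisers' proxy values at the currently selected question (by raising the second price they pay), thereby flipping the stage~1 selection and earning strictly positive utility. The paper's instance is slightly leaner (two advertisers rather than three, with a single concrete deviating bid rather than a limiting argument), but the leverage point and the logic are the same.
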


The equilibrium outcome in~\Cref{thm: pure Nash} might differ from the welfare-maximizing outcome (the outcome of the mechanism as described in~\Cref{sec: direct}). Specifically, while the stage 2 outcome of the equilibrium of~\Cref{thm: pure Nash} maximizes welfare for every realized question $\ell$ and signal $\sigma$, the stage 1 question selected maximizes expected \emph{quasilinear utility}
\[
\ell^*_{eq} = \operatorname*{argmax}_{\ell} \sum_{i=1}^n R_{i} (\ell; v) = \operatorname*{argmax}_{\ell} \sum_{i=1}^n E_{\sigma\sim S_\ell} \Big[\,u_i^{A}\big(v;\ell,\sigma\big)\,\Big]. \]
On the other hand, the welfare-maximizing outcome selects the question that maximizes expected \emph{value}
\[
\ell^*_{opt} = \operatorname*{argmax}_{\ell} E_{\sigma \sim S_{\ell}(.) } [ \max_{j=1,\dots,n} b_j \cdot \alpha_j(\ell, \sigma) ].
\]

Therefore, it is natural to ask how inefficient this equilibrium outcome can be. And, more generally, how bad is the Price of Anarchy of the proposed mechanism $\mathcal{M}=(\mathcal{M}^{VCG}_Q,\mathcal{M}^{VCG}_A)$?
Our next theorem shows that not only do inefficient equilibria exist, but even the pure Nash equilibrium suggested in~\Cref{thm: pure Nash} has infinite Price of Anarchy.

\begin{theorem}\label{thm: unbounded poa}
The Price of Anarchy of the mechanism $\mathcal{M}=(\mathcal{M}^{VCG}_Q,\mathcal{M}^{VCG}_A)$ is unbounded. Moreover, for all $c \geq 10$, there exists an instance such that the optimal welfare is at least $c$ times the welfare of the equilibrium from~\Cref{thm: pure Nash}.
\end{theorem}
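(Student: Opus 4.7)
The plan is to construct an instance where the equilibrium from \Cref{thm: pure Nash} picks a welfare-poor question because, on the welfare-rich question, a structural tie in stage~2 annihilates every advertiser's surplus. Fix parameters $N>M>1$ and an integer $n>N/M$; taking $M=2$, $N=2c$, $n=c+1$ later will yield ratio exactly $c$. Let $\Theta=\{\theta_1,\dots,\theta_n\}$ with uniform prior $\D$. Introduce $2n+1$ advertisers, all with base value $1$: for each $i\in[n]$ two niche advertisers $A_i,B_i$ with $\alpha_{A_i}(\theta_j)=\alpha_{B_i}(\theta_j)=N\cdot\mathbf{1}[j=i]$, and one constant advertiser $C$ with $\alpha_C(\theta)=M$ for every $\theta$. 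Make exactly two questions available: $\ell_{\mathrm{full}}$, which perfectly reveals the realized state, and $\ell_{\mathrm{null}}$, returning a single uninformative signal.

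The next step is to compute the stage-1 equilibrium bids $R_j(\ell;v)$ prescribed by \Cref{thm: pure Nash}. Under $\ell_{\mathrm{full}}$, the signal $\sigma_i$ reveals state $\theta_i$, giving effective values $(N,N)$ for $A_i,B_i$ and at most $M<N$ for everyone else; the stage-2 VCG ties $A_i,B_i$ at the top, the winner pays second price $N$, so every advertiser's utility is $0$ and $R_j(\ell_{\mathrm{full}};v)=0$ for all $j$. Under $\ell_{\mathrm{null}}$ the posterior equals the prior, so effective values are $N/n$ for each niche advertiser and $M$ for $C$; since $n>N/M$, advertiser $C$ uniquely wins stage 2 at second price $N/n$, so $R_C(\ell_{\mathrm{null}};v)=M-N/n>0$ and $R_j(\ell_{\mathrm{null}};v)=0$ otherwise. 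The stage-1 VCG selects the question maximizing $\sum_j R_j$, and so picks $\ell_{\mathrm{null}}$.

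The equilibrium therefore plays $\ell_{\mathrm{null}}$ and $C$ wins stage 2, delivering realized welfare $v_C\alpha_C(\theta)=M$ in every state; meanwhile the direct-revelation optimum of \Cref{sec: direct} picks $\ell_{\mathrm{full}}$ and achieves welfare $N$ (a niche winner receives value $N$ in every revealed state). The welfare ratio is $N/M$; choosing $M=2$, $N=2c$, $n=c+1$ gives exactly $c$ for any $c\ge 10$, which yields the ``moreover'' statement and unboundedness as $c\to\infty$. The key design step---and the one I expect to be the main obstacle---is using \emph{two} niche advertisers per state. Without this duplication, the unique niche winner in $\ell_{\mathrm{full}}$ would earn positive surplus $N-M$, pushing stage 1 to pick $\ell_{\mathrm{full}}$ and capping the welfare ratio at a small constant. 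Once the structural tie is in place, the rest is straightforward VCG bookkeeping and an appeal to \Cref{thm: pure Nash} for the equilibrium verification.
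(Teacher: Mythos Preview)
Your proof is correct and follows a genuinely different route from the paper's. The paper builds $m$ advertisers and $m$ states where, under the revealing question, each state has a ``primary'' advertiser with conversion rate $1$ and a ``secondary'' with $1-\delta$; this near-tie drives every stage-2 surplus down to $\delta$, so $\sum_i R_i(\ell_{\mathrm{rev}};v)=\delta$. A carefully engineered asymmetry (one advertiser is secondary for \emph{two} states) then makes that same advertiser the unique winner under the uninformative question with surplus $(1-\delta)/m>\delta$. Your construction instead exploits \emph{exact} ties: duplicating each niche advertiser forces the second price to equal the winner's value, so $R_j(\ell_{\mathrm{full}};v)=0$ identically, and you add a separate constant advertiser $C$ to pick up positive surplus under $\ell_{\mathrm{null}}$. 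Your approach is more transparent (the surplus-killing step is a one-line observation rather than a $\delta$ calculation) and achieves ratio exactly $N/M=c$; the paper's avoids any reliance on tie-breaking in $\mathcal{M}^{VCG}_A$, which makes it slightly more robust as a counterexample. One small technicality: for non-integer $c$ you need $n$ to be an integer strictly greater than $N/M=c$, so take $n=\lceil c\rceil+1$ rather than $n=c+1$.
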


\begin{proof}
Fix an integer $m\ge 3$ and a parameter $\delta > 0$ (fixed later in this proof).
Our instance is constructed as follows:
\begin{itemize}[leftmargin=*]
    \item States and questions:
    \begin{enumerate}
        \item $\Theta=\{1,2,\dots,m\}$ with the uniform prior $\D(\theta=t)=1/m$.
        \item A revealing question $\ell=1$: $\Sigma_1=\{\sigma_1,\dots,\sigma_m\}$ with $Q_1(\sigma_t\mid \theta=t)=1$. Thus $\D_{1,\sigma_t}$ is a point mass on state $t$.
        \item An uninformative question $\ell=2$: $\Sigma_2=\{\bar\sigma\}$ with $Q_2(\bar\sigma\mid \theta)=1$ for all $\theta$. Thus, $\D_{2,\bar\sigma}=\D$.
    \end{enumerate}
    \item Advertisers: 
        \begin{enumerate}
            \item There are $n=m$ advertisers, each with a base value $v_i=1$.
            \item For each state $i \in\{1,\dots,n\}$, $\alpha_{i}(i)=1$. That is, advertiser $i$ is the ``primary'' advertiser for state $i$.
            \item For every state $i \in \{1,\dots,n\}$, $\alpha_{i-1}(i) = 1-\delta$ (where $i-1 = n$ for $i=0$). That is, advertiser $i$ is the ``secondary'' advertiser for state $i+1$.
            \item For state $i=3$, $\alpha_1(3)=1-\delta$ and $\alpha_2(3) = 0$. That is, advertiser $1$ is the ``secondary'' advertiser for state 3 as well (also the secondary for state 2, and the primary for state 1), while advertiser $2$ is not the secondary advertiser for any state.
            \item All remaining conversion rates are $0$: $\alpha_i(t)=0$ for all other pairs $(i,t)$.
        \end{enumerate}
\end{itemize}

This construction ensures that we have exactly two advertisers with positive bids in each state (which creates competition under the revealing question). We now compute induced (signal) conversion rates and compare the two questions.

\paragraph{Induced conversion rates.} Under question $\ell=1$, for every realized $\sigma_t$ the posterior is a point mass at state $t$, therefore, $\alpha_i(1,\sigma_t)=\alpha_i(t)$ for every state $t$ (and signal $\sigma_t$) and advertiser $i$. On the other hand, under the uninformative question $\ell=2$, the posterior equals the prior; therefore, $\alpha_i(2,\bar\sigma)= E_{\theta\sim\D}\big[\alpha_i(\theta)\big] = \frac{1}{m}\sum_{t=1}^m \alpha_i(t)$.
From our construction, we have:
    \begin{align*}
      \alpha_1(2,\bar\sigma)&=\frac{1+2(1-\delta)}{m}=\frac{3-2\delta}{m} \\
      \alpha_2(2,\bar\sigma)&=\frac{1}{m} \\
      \alpha_i(2,\bar\sigma)&=\frac{1+(1-\delta)}{m}=\frac{2-\delta}{m}\quad\text{for all }i\notin\{1,2\}
    \end{align*}

\paragraph{Stage 2 outcome for $\mathcal{M}^{VCG}_A$}
Given any realized $(\ell,\sigma)$, $\mathcal{M}^{VCG}_A$ is a single-item VCG auction on effective values $b_i^A \, \alpha_i(\ell,\sigma) = v_i \, \alpha_i(\ell,\sigma) = \alpha_i(\ell,\sigma)$ (since truthful bidding is a dominant strategy, and $v_i=1$).

If $\ell=1$, in every state $t$, there are two positive effective values: $1$ from the primary advertiser (advertiser $t$) and $1-\delta$ from the secondary advertiser (advertiser $t-1$ for $t \notin \{ 1, 3 \}$, advertisers $1$ and  $2$ for $t=3$, and advertiser $n$ for $t=1$). Thus, for any state $t$, we have that the welfare (the value generated) is $1$, but the utility of the winning advertiser (the primary advertiser) is $\delta$.
Taking an expectation over the uniform prior we have that $R_i(1;v)=\delta/m$ for all $i$, where $R_i(\ell;v) = E_{\sigma\sim S_\ell}[u_i^{A}(v;\ell,\sigma)]$ is advertiser $i$'s utility under question $\ell$.

If $\ell=2$, there is a single signal $\bar\sigma$ with effective values
$\alpha_1(2,\bar\sigma) =\frac{3-2\delta}{m}$, $\alpha_2(2,\bar\sigma) =\frac{1}{m}$, and $\alpha_i(2,\bar\sigma) =\frac{2-\delta}{m}$ for all $i \notin \{ 1,2 \}$.
Therefore, under VCG (a second price auction) with these bids, we have that the (expected) value is $\frac{3-2\delta}{m}$ (for $\delta < 1$, we have $\frac{3-2\delta}{m} > \frac{2-\delta}{m}$), revenue $\frac{2-\delta}{m}$, and $R_i(2; v) = 0$ for $i \neq 1$, but $R_1(2;v) = \frac{3-2\delta}{m} - \frac{2-\delta}{m} = \frac{1-\delta}{m}$.

\paragraph{Stage 1 outcome for $\mathcal{M}^{VCG}_Q$} 
Recall that the equilibrium from~\Cref{thm: pure Nash} prescribes that $b_{i,\ell}^{Q}=R_i(\ell;v)$, for all advertisers $i$ and questions $\ell$. Given the stage 2 calculations, we therefore have that the total bid on question $\ell = 1$ is $\sum_{i=1}^n R_i(1;v)=\delta$. On the other hand, for question $\ell = 2$ we have that the total bid is $\sum_{i=1}^n R_i(2;v) = R_1(2;v) = \frac{1-\delta}{m}$.
By picking $\delta$ such that $\frac{1-\delta}{m} > \delta$ (e.g., $\delta = 1/m^2$), we have that the mechanism picks question $\ell = 2$.

Therefore, in the prescribed equilibrium for the mechanism $\mathcal{M}=(\mathcal{M}^{VCG}_Q,\mathcal{M}^{VCG}_A)$, the overall outcome is to pick question $\ell = 2$, which results in advertiser $1$ always winning in stage 2, for a total (expected) welfare of $\frac{3-2\delta}{m}$. On the other hand, presenting question $\ell = 1$ results in the ``primary'' advertiser to win the item in every state $t$, for a total (expected) welfare of $1$.
\end{proof}

It is tempting to conjecture that perhaps one can bypass~\Cref{thm: unbounded poa} by tweaking the mechanism. Arguably, the most natural choice would be to change the VCG payment in stage 1 to other natural payment formats, e.g., first-price payments, or all-pay payments (but still select the question with the highest total bid). As we show next, the result of~\Cref{thm: unbounded poa} is robust to such changes. The proof of~\Cref{thm: robust poa} is deferred to Appendix~\ref{app: missing}.

\begin{theorem}[Robustness of Unbounded PoA]\label{thm: robust poa}
The Price of Anarchy remains unbounded if the stage 1 VCG auction, $\mathcal{M}^{VCG}_Q$, is replaced by a mechanism that uses the same ``highest sum of bids'' allocation rule but with one of the following payment rules: (i) \textbf{First-Price:} Only the advertiser who submitted the single highest bid for the winning question pays that bid, and (ii) \textbf{All-Pay:} Every advertiser pays their submitted bid for the winning question.
\end{theorem}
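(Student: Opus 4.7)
The plan is to reuse the instance from the proof of Theorem 3.3 and to exhibit, for each of the two payment rules, a Nash equilibrium of the modified modular mechanism in which the inefficient uninformative question $\ell=2$ is still chosen. Since only the stage-1 payment rule changes, stage 2 remains $\mathcal{M}_A^{VCG}$ with truthful bidding dominant, so the per-question utilities $R_i(\ell;v)$ computed in Theorem 3.3 carry over unchanged: $R_i(1;v)=\delta/m$ for every $i$, $R_1(2;v)=(1-\delta)/m$, and $R_j(2;v)=0$ for $j\neq 1$. The whole argument then reduces to producing a stage-1 equilibrium whose winner is $\ell=2$: the welfare under $\ell=2$ is only $(3-2\delta)/m$ while the optimum (from $\ell=1$) is $1$, so the ratio blows up as $m\to\infty$ with $\delta=1/m^2$.

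For both rules I propose the same candidate profile: advertiser $1$ bids $(m-1)\delta/m$ on $\ell=2$ and $0$ on $\ell=1$, and every $j\in\{2,\dots,m\}$ bids $\delta/m$ on $\ell=1$ and $0$ on $\ell=2$. The two sums are tied at $(m-1)\delta/m$, and I break sum-ties in favor of $\ell=2$, making it the winner. Under first-price, advertiser $1$ is the unique bidder on the winning question $\ell=2$ and pays its own bid, earning $(1-\delta)/m-(m-1)\delta/m=(1-m\delta)/m$, while every $j\neq 1$ pays and earns zero. Under all-pay, every advertiser pays its bid on $\ell=2$, but all $j\neq 1$ bid $0$ there, so payments and utilities coincide with the first-price case.

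Verification of equilibrium requires walking through each unilateral deviation under each rule. For $j\neq 1$, raising the $\ell=1$ bid above $\delta/m$ flips the winner to $\ell=1$ and makes $j$ the unique $\ell=1$ top-bidder in either payment rule, so $j$ pays strictly more than $\delta/m$ for a prize of only $\delta/m$, a net loss; lowering the bid keeps $\ell=2$ winning and yields the same zero payoff; any positive bid on $\ell=2$ is either payoff-neutral (first-price, if still below $(m-1)\delta/m$) or strictly negative (all-pay, or if it out-bids advertiser $1$), since $R_j(2;v)=0$. For advertiser $1$, lowering the $\ell=2$ bid below $(m-1)\delta/m$ cedes the question to $\ell=1$ and collapses the payoff to $R_1(1;v)=\delta/m$, strictly worse than $(1-m\delta)/m$ whenever $\delta<1/(m+1)$ --- comfortably satisfied by $\delta=1/m^2$ for $m\geq 2$; raising the bid or placing any positive mass on $\ell=1$ only increases expected payments with no benefit.

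The principal technical obstacle is the familiar nonexistence of a pure Nash equilibrium in complete-information first-price and all-pay games, here manifested as advertiser $1$'s incentive to undercut the threshold $(m-1)\delta/m$ by an arbitrarily small $\epsilon$. The sum-tie-breaking convention pins the knife-edge equilibrium down; without it, the same asymptotic welfare ratio follows from a sequence of $\epsilon$-Nash equilibria with $b_{1,2}^Q=(m-1)\delta/m+\epsilon$ as $\epsilon\to 0^+$. The parameter window $\delta/m\leq b\leq (1-2\delta)/[m(m-1)]$, equivalent to $\delta\leq 1/(m+1)$, is what lets advertiser $1$'s threshold bid simultaneously deter the $m-1$ possible underbidders on $\ell=1$ and remain individually rational --- the same balance that powered the original VCG construction in Theorem 3.3.
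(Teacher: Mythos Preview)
Your proof is correct and follows essentially the same route as the paper: reuse the instance from the unbounded-PoA theorem, keep stage~2 as VCG so the $R_i(\ell;v)$ values carry over unchanged, then exhibit a stage-1 profile with tied sums (breaking ties toward $\ell=2$) and verify it is a Nash equilibrium under both first-price and all-pay payments. The paper chooses $b_{1,2}^Q=\delta$ while you choose $b_{1,2}^Q=(m-1)\delta/m$; your choice makes the tied-sums claim arithmetically exact, but the construction, the tie-breaking convention, and the case-by-case deviation analysis are otherwise the same.
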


\section{Conclusion}

In this work, we introduce a formal model for interactive sponsored search, capturing the growing trend of platforms eliciting user intent through suggestions before conducting a final ad auction. In this model, we analyzed two different design choices: an end-to-end, direct revelation mechanism that jointly optimizes the choice of suggestion and the subsequent ad allocation, and a simpler, modular two-stage mechanism that decouples the suggestion auction from the ad auction. Our results suggest that the direct revelation format provides overall better outcomes for the platform and advertisers.

Our work opens several avenues for future research. One direction is to move beyond welfare-maximization and study mechanisms designed for revenue maximization. Further afield, a natural evolution of this technology would be hybrid systems that present a mix of organic and sponsored follow-up questions, possibly over multiple rounds of interaction. Analyzing the complex dynamics of such hybrid systems, and how they affect the choice of mechanism, is an important next step.

\bibliographystyle{alpha}
\bibliography{refs}

@misc{GoogleSearchEEA,
  title        = {New Search experiences in EEA: Rich results, aggregator units, and refinement chips},
  author       = {Google Search Central},
  howpublished = {\url{https://developers.google.com/search/blog/2024/02/search-experiences-in-eea}},
  year         = {2024},
  month        = {February},
  note         = {Accessed: 2025-10-27}
}

@inproceedings{banchio2025ads,
  title={Ads in Conversations},
  author={Banchio, Martino and Mehta, Aranyak and Perlroth, Andres},
  booktitle={Proceedings of the 26th ACM Conference on Economics and Computation},
  pages={350--350},
  year={2025}
}

@inproceedings{mordo2024,
author = {Mordo, Tommy and Tennenholtz, Moshe and Kurland, Oren},
title = {Sponsored Question Answering},
year = {2024},
isbn = {9798400706813},
publisher = {Association for Computing Machinery},
address = {New York, NY, USA},
url = {https://doi.org/10.1145/3664190.3672517},
doi = {10.1145/3664190.3672517},
booktitle = {Proceedings of the 2024 ACM SIGIR International Conference on Theory of Information Retrieval},
pages = {167–173},
numpages = {7},
keywords = {auctions, question answering, sponsored question answering},
location = {Washington DC, USA},
series = {ICTIR '24}
}

@inproceedings{dubey2024,
author = {Dubey, Avinava and Feng, Zhe and Kidambi, Rahul and Mehta, Aranyak and Wang, Di},
title = {Auctions with LLM Summaries},
year = {2024},
isbn = {9798400704901},
publisher = {Association for Computing Machinery},
address = {New York, NY, USA},
url = {https://doi.org/10.1145/3637528.3672022},
doi = {10.1145/3637528.3672022},
booktitle = {Proceedings of the 30th ACM SIGKDD Conference on Knowledge Discovery and Data Mining},
pages = {713–722},
numpages = {10},
keywords = {auction design, computational advertising},
location = {Barcelona, Spain},
series = {KDD '24}
}

@article{hajiaghayi2024ad,
  title={Ad auctions for llms via retrieval augmented generation},
  author={Hajiaghayi, MohammadTaghi and Lahaie, S{\'e}bastien and Rezaei, Keivan and Shin, Suho},
  journal={Advances in Neural Information Processing Systems},
  volume={37},
  pages={18445--18480},
  year={2024}
}

@misc{soumalias2025truthfulaggregationllmsapplication,
      title={Truthful Aggregation of LLMs with an Application to Online Advertising}, 
      author={Ermis Soumalias and Michael J. Curry and Sven Seuken},
      year={2025},
      eprint={2405.05905},
      archivePrefix={arXiv},
      primaryClass={cs.GT},
      url={https://arxiv.org/abs/2405.05905}, 
}

@article{feizi2025online,
  title={Online advertisements with llms: Opportunities and challenges},
  author={Feizi, Soheil and Hajiaghayi, MohammadTaghi and Rezaei, Keivan and Shin, Suho},
  journal={ACM SIGecom Exchanges},
  volume={22},
  number={2},
  pages={66--81},
  year={2025},
}

@misc{duetting2024mechanismdesignlargelanguage,
      title={Mechanism Design for Large Language Models}, 
      author={Paul D{\"u}tting and Vahab Mirrokni and Renato Paes Leme and Haifeng Xu and Song Zuo},
      year={2024},
      eprint={2310.10826},
      archivePrefix={arXiv},
      primaryClass={cs.GT},
      url={https://arxiv.org/abs/2310.10826}, 
}

@inproceedings{BergemannEC25,
author = {Bergemann, Dirk and Bojko, Marek and Duetting, Paul and Paes Leme, Renato and Xu, Haifeng and Zuo, Song},
title = {Data-Driven Mechanism Design: Jointly Eliciting Preferences and Information},
year = {2025},
isbn = {9798400719431},
publisher = {Association for Computing Machinery},
address = {New York, NY, USA},
url = {https://doi.org/10.1145/3736252.3742578},
doi = {10.1145/3736252.3742578},
booktitle = {Proceedings of the 26th ACM Conference on Economics and Computation},
pages = {507},
numpages = {1},
keywords = {mechanism design, data-driven mechanism design, large language models, click-through rate, posterior equilibrium},
location = {Stanford University, Stanford, CA, USA},
series = {EC '25}
}

@misc{PerplexityAds,
author = {{Perplexity AI}},
title = {Why We're Experimenting with Advertising},
howpublished = {\url{https://www.perplexity.ai/hub/blog/why-we-re-experimenting-with-advertising}},
year = {2024}, 
note = {Accessed: September 28, 2025}
}

@online{google_adsense_related_search,
  author = {Google},
  title = {Related search for your content pages - Google AdSense Help},
  url = {https://support.google.com/adsense/answer/10233819?hl=en},
  urldate = {2025-10-26},
  year = {2025}
}

@online{Microsoft_Bing_Results,
  author       = {Microsoft},
  title        = {How Bing delivers search results},
  year         = {2025},
  month        = mar,
  url          = {https://support.microsoft.com/en-us/topic/how-bing-delivers-search-results-d18fc815-ac37-4723-bc67-9229ce3eb6a3},
  urldate      = {2025-10-26},
  organization = {Microsoft Support}
}

@online{Google_Search_Approach,
  author       = {Google},
  title        = {Our Approach - How Google Search Works},
  url          = {https://www.google.com/intl/en_us/search/howsearchworks/our-approach/},
  urldate      = {2025-10-26},
  organization = {Google}
}

@online{Google_AFS_Policies,
  author       = {{Google}},
  title        = {AdSense for Search (AFS) policies},
  url          = {https://support.google.com/adsense/answer/1354757?hl=en&ref_topic=1705820&sjid=2919884519792236789-NA},
  urldate      = {2025-10-26},
  organization = {Google AdSense Help}
}

\newpage

\appendix

\section{Missing proofs}\label{app: missing}

\begin{proof}[Proof of~\Cref{thm: pure Nash}]
For any realized $(\ell,\sigma)$, $\mathcal{M}^{VCG}_A$ is a single-item VCG auction on effective values $b_i^A\,\alpha_i(\ell,\sigma)$. Hence, $b_i^A=v_i$ is a dominant strategy for every advertiser $i$, independent of stage 1 behavior. Therefore, it suffices to argue that no stage 1 deviation exists, with stage 2 bids $b_i^A=v_i$ fixed.

Let $u_i^{A}(v;\ell,\sigma)$ denote $i$’s quasilinear utility in stage 2 outcome (fixing $b_i^A=v_i$ for all $i$), and let $R_i(\ell;v)\ :=\ E_{\sigma\sim S_\ell} \Big[\,u_i^{A}\big(v;\ell,\sigma\big)\,\Big]$ be each advertiser's ex-ante utility for implementing question $\ell$, fixing the behavior in stage 2.
With $b^A=v$ fixed, advertiser $i$'s total expected utility under stage 1 bids $b^Q$ is
\[
U_i(b^Q)\ =\ \sum_{\ell=1}^k \lambda_\ell(b^Q)\,R_i(\ell;v)\;-\;p_i^Q(b^Q).
\]

The prescribed strategy in stage 1 is $b_{i,\ell}^{Q} = R_i(\ell;v)$, for every advertiser $i$ and question $\ell$. 
Fix the bids of all other advertisers $j$ to $b_{j,\ell}^{Q} = R_j(\ell;v)$, and consider an alternative stage 1 bidding strategy $\hat{b}_{i,\ell}^{Q}$. Slightly overloading notation, it remains to argue that 
$U_i(R_i; R_{-i}) \geq U_i(\hat{b}_i^Q; R_{-i})$.

First, if the chosen question under $(\hat{b}_i^Q, R_{-i})$ is the same as the chosen question under $(R_i; R_{-i})$, then the stage 1 payment $p_i^Q(\hat{b}_i^Q, R_{-i})$ is equal to $p_i^Q(R_i, R_{-i})$ (since the externality is the same), so $U_i(\hat{b}_i^Q; R_{-i})=U_i(R_i; R_{-i})$. Second, consider the case that the chosen question $\hat{\ell}$ under $(\hat{b}_i^Q, R_{-i})$ is \emph{not} the same as the chosen question $\ell^*$ under $(R_i; R_{-i})$. Then we have
\begin{align*}
 U_i(\hat b_i^Q; R_{-i})
&= R_i(\hat\ell;v) - p_i^Q(\hat b_i^Q, R_{-i}) \\
&= \sum_{j=1}^n R_j(\hat\ell;v) - \max_{\ell}\sum_{j\neq i} R_j(\ell;v).   
\end{align*}
The second term does not depend on $i$. And, the first term, $\sum_{j=1}^n R_j(\hat{\ell} ; v)$ is maximized for $\hat{\ell} = \ell^*$. Therefore, 
\begin{align*}
U_i(\hat b_i^Q; R_{-i}) &= \sum_{j=1}^n R_j(\hat\ell;v) - \max_{\ell}\sum_{j\neq i} R_j(\ell;v) \\
&\leq \sum_{j=1}^n R_j(\ell^*;v) - \max_{\ell}\sum_{j\neq i} R_j(\ell;v) \\
&= U_i(R_i; R_{-i}).
\end{align*}
This concludes the proof that the prescribed strategy is a pure Nash equilibrium.
\end{proof}

\begin{proof}[Proof of~\Cref{prop: proxy two stage is not truthful}]
Consider an instance with two advertisers with base values $v_1 = v_2 = 10$. There are two questions, each producing two signals with equal probability.

Question A: With probability $1/2$ the effective conversion rates are $(\alpha_1,\alpha_2)=(0.6, 0)$, and with probability $1/2$ they are $(0, 0.8)$. Thus in each signal there is a unique winner and the second-highest effective bid is $0$.

Question B: with probability $1/2$ the rates are $(1,0.4)$; with probability $1/2$ they are $(1,0)$.

\paragraph{Truthful bidding: $b_i = v_i$} Under truthful bidding, Question A yields effective bids $(6, 0)$ and $(0, 8)$, each with probability $1/2$. Therefore
\[
R_1(A;v)=\frac{1}{2} \cdot 6 = 3,\qquad
R_2(A;v)=\frac{1}{2} \cdot 8 = 4.
\]
Question B yields effective bids $(10,4)$ (w.p. $1/2$) and $(10,0)$ (w.p. $1/2$). In the first signal, advertiser 1 wins and pays $4$; in the second, advertiser 1 wins and pays $0$. Hence
\[
R_1(B;v)= \frac{1}{2} \cdot 6 + \frac{1}{2} \cdot 10 = 8,\qquad
R_2(B;v)= 0.
\]
Stage 1 VCG is run under bids $\{R_i(\ell;v)\}$. We have $\sum_i R_i(A;v) = 7$ and $\sum_i R_i(B;v) = 8$, therefore Question B is picked. The utility of advertiser 2 is $U_2 = 0$.

\paragraph{A profitable deviation.} 
Consider the following deviation for advertiser 2: $b_2 = 20$ (while still $b_1 = v_1 = 10$).

We have that Question A yields effective bids $(6,0)$ and $(0,16)$, each with probability $1/2$. Therefore:
\[
R_1(A;b)= \frac{1}{2} \cdot 6 = 3,\qquad R_2(A;b)= \frac{1}{2} \cdot 16 = 8.
\]
Question B yields effective bids $(10,8)$ (w.p. $1/2$) and $(10,0)$ (w.p. $1/2$). In the first signal, advertiser 1 wins and pays $8$; in the second, advertiser 1 wins and pays $0$. Hence
\[
R_1(B;b)= \frac{1}{2} \cdot 2 + \frac{1}{2} \cdot 10 = 6,\qquad
R_2(B;b)= 0.
\]
Therefore, this time the question with the highest bids is Question A ($\sum_i R_i(A;b) = 11$ and $\sum_i R_i(B;b) = 6$). Advertiser 2 imposes an externality: $\max\{ R_1(A;b), R_1(B;b) \} - R_1(A;b) = 6-3 = 3$, so $p_2^Q = 3$. And, the \emph{true} value that advertiser 2 has for Question A is $R_2(A;v) = 4$ (calculated earlier, from the truthful bid). Therefore, advertiser 2's utility is $4 - p_2^Q = 1$, which is better than the utility of zero under truthful bidding.
\end{proof}

\begin{proof}[Proof of~\Cref{thm: robust poa}]
We use the instance from the proof of~\Cref{thm: unbounded poa}. We begin by noting that the second-stage mechanism is a VCG auction, which is dominant-strategy incentive compatible. Advertisers will therefore bid their true base values ($b_i^A = v_i$) in the second stage. This allows us to use the derived expected utilities, $R_i(\ell;v)$, as the advertisers' values in the first-stage auction:
\begin{itemize}[leftmargin=*]
    \item $R_i(1;v) = \delta/m$ for all advertisers $i$.
    \item $R_1(2;v) = \frac{1-\delta}{m}$, and $R_j(2;v) = 0$ for all $j \neq 1$.
\end{itemize}

Now, consider a first-price auction (the advertiser with the highest bid on the winning question pays this bid) or all-pay auction (all advertisers pay their bid on the winning question) for the first stage, where the question with the highest sum of bids wins. 
We will assume tie-breaking in favor of question 2: that is, if the sum of bids is the same, then question 2 (the uninformative question) is chosen. The proof can be easily adjusted to work under tie-breaking that favors the first question.

We claim the following strategy profile is a Nash Equilibrium. Advertiser 1 bids $b_{1,2}^Q = \delta$ for the uninformative question $\ell=2$ and $b_{1,1}^Q=0$. All other advertisers $j \neq 1$ bid $0$ for the uninformative question, and $\frac{\delta}{m}$ for the informative question.

In this profile, the inefficient question $\ell=2$ is chosen (the sum of bids on both questions is $\delta$; by our tie-breaking rule, question 2 wins), yielding a welfare of $\frac{3-2\delta}{m}$, while the optimal welfare is $1$. It remains to argue that this is an equilibrium.

First, advertiser 1 does not want to increase its bid for the winning question, since this action would increase its payment without changing the outcome. Increasing its bid for the losing question will change the outcome to question 1 being picked, which gives utility at most $\frac{\delta}{m}$, which is strictly less than her current utility of $R_1(2;v) - \delta = \frac{1-\delta}{m} - \delta$, for all $\delta < \frac{1}{m+2}$. Advertiser 1 cannot decrease its bid for the losing question (it's already zero). Finally, advertiser 1 does not want to decrease its bid for the winning question: any decrease of $\epsilon > 0$ would result in a different question being asked. Currently, advertiser $1$ gets utility $R_1(2;v) - \delta = \frac{1-\delta}{m} - \delta$ which is at least $R_1(1;v) = \frac{\delta}{m}$ for all $\delta < \frac{1}{m+2}$.

Second, the only thing that an advertiser $i \neq 1$ can do to affect the outcome is to bid strictly more than $\frac{\delta}{m}$ on question 1. However, $R_i(1;v)= \frac{\delta}{m}$, so, since under both payment rules (first-price and all-pay) they will end up paying their bid, this deviation is not profitable. Finally, if advertiser $i$'s deviation, for $i \neq 1$, does not affect the outcome (question asked), then its payment cannot be further reduced (since its bidding zero for the winning question already).
\end{proof}

\end{document}